\newtheorem{thm}{Theorem}
\newtheorem{lem}{Lemma}
\newtheorem{prb}{Problem}
\newtheorem{cor}{Corollary}
\newcommand{\ie}{{\it i.e.},\ }
\newcommand{\eg}{{\it e.g.},\ }
\newcommand{\removelatexerror}{\let\@latex@error\@gobble}
\begin{document}

\setcopyright{acmcopyright}
\acmJournal{POMACS}
\acmYear{2017} \acmVolume{1} \acmNumber{1} \acmArticle{27} \acmMonth{6} \acmDOI{http://dx.doi.org/10.1145/3084465}




%

\title{A Low-Complexity Approach to Distributed Cooperative Caching with Geographic Constraints}

\author{Konstantin Avrachenkov}
\affiliation{INRIA Sophia Antipolis
\country{France}}
\email{k.avrachenkov@inria.fr}
\author{Jasper Goseling}
\affiliation{ 
University of Twente
\country{The Netherlands}}
\email{j.goseling@utwente.nl}
\author{Berksan Serbetci}
\affiliation{ 
University of Twente
\country{The Netherlands}}
\email{b.serbetci@utwente.nl}

\begin{CCSXML}
<ccs2012>
<concept>
<concept_id>10003033.10003058.10003065</concept_id>
<concept_desc>Networks~Wireless access points, base stations and infrastructure</concept_desc>
<concept_significance>500</concept_significance>
</concept>
<concept>
<concept_id>10003033.10003079</concept_id>
<concept_desc>Networks~Network performance evaluation</concept_desc>
<concept_significance>300</concept_significance>
</concept>
<concept>
<concept_id>10003033.10003083.10003094</concept_id>
<concept_desc>Networks~Network dynamics</concept_desc>
<concept_significance>300</concept_significance>
</concept>
<concept>
<concept_id>10003033.10003106.10003113</concept_id>
<concept_desc>Networks~Mobile networks</concept_desc>
<concept_significance>300</concept_significance>
</concept>
<concept>
<concept_id>10003033.10003068.10003073.10003074</concept_id>
<concept_desc>Networks~Network resources allocation</concept_desc>
<concept_significance>100</concept_significance>
</concept>
<concept>
<concept_id>10003752.10003809.10003716</concept_id>
<concept_desc>Theory of computation~Mathematical optimization</concept_desc>
<concept_significance>300</concept_significance>
</concept>
<concept>
<concept_id>10003752.10010070.10010099</concept_id>
<concept_desc>Theory of computation~Algorithmic game theory and mechanism design</concept_desc>
<concept_significance>300</concept_significance>
</concept>
</ccs2012>
\end{CCSXML}

\ccsdesc[500]{Networks~Wireless access points, base stations and infrastructure}
\ccsdesc[300]{Networks~Network performance evaluation}
\ccsdesc[300]{Networks~Network dynamics}
\ccsdesc[300]{Networks~Mobile networks}
\ccsdesc[100]{Networks~Network resources allocation}
\ccsdesc[300]{Theory of computation~Mathematical optimization}
\ccsdesc[300]{Theory of computation~Algorithmic game theory and mechanism design}

\keywords{Caching; Wireless networks; Distributed optimization; Game theory; Simulated annealing}
\maketitle

\begin{abstract}
We consider caching in cellular networks in which each base station is equipped with a cache that can store a limited number of files. The popularity of the files is known and the goal is to place files in the caches such that the probability that a user at an arbitrary location in the plane will find the file that she requires in one of the covering caches is maximized.

We develop distributed asynchronous algorithms for deciding which contents to store in which cache. Such cooperative algorithms require communication only between caches with overlapping coverage areas and can operate in asynchronous manner. The development of the algorithms is
principally based on an observation that the problem can be viewed as a potential game. Our basic algorithm is derived from the best response
dynamics. We demonstrate that the complexity of each best response step is independent of the number of files, linear in the cache capacity and linear in the maximum number of base stations that cover a certain area. Then, we show that the overall algorithm complexity for a
discrete cache placement is polynomial in both network size and catalog size. In practical examples, the algorithm converges in just a
few iterations. Also, in most cases of interest, the basic algorithm finds the best Nash equilibrium corresponding to the global optimum.
We provide two extensions of our basic algorithm based on stochastic and deterministic simulated annealing which find the global optimum.

Finally, we demonstrate the hit probability evolution on real and synthetic networks numerically and show that our distributed caching algorithm performs significantly better than storing the most popular content, probabilistic content placement policy and Multi-LRU caching policies.
\end{abstract}

\section{Introduction}
Data traffic in cellular networks is rapidly expanding and is expected to increase so much in the upcoming years that existing network infrastructures will not be able to support this demand. One of the bottlenecks will be formed by the backhaul links that connect base stations to the core network and, therefore, we need to utilize these links as efficiently as possible. A promising means to increase efficiency compared to existing architectures is to proactively cache data in the base stations. The idea is to store part of the data
at the wireless edge and use the backhaul only to refresh the stored data. Data replacement will depend on the users' demand distribution over time. As this distribution is varying slowly, the stored data can be refreshed at off-peak times. In this way, caches containing popular content serve as helpers to the overall system and decrease the maximum backhaul load.

Our goal in this paper is on developing low-complexity distributed and asynchronous content placement algorithms.
This is of practical relevance in cellular networks in which an operator wants to optimize the stored content in caches
(i.e.,\ base stations) while keeping the communication in the network to a minimum. In that case it will help that
caches exchange information only locally.

In the remainder of the introduction we shall give an overview of the model and contributions. Then, in the ensuing section,
we provide a very thorough discussion of the related works.

We consider continuous and discrete models with caches located at arbitrary locations either in the plane or in the grid. Caches know their own coverage area as well as the coverage areas of other caches that overlap with this region. There is a content catalog from which users request files according to a known probability distribution. Each cache can store a limited number of files and the goal is to minimize the probability that a user at an arbitrary location in the plane will not find the file that she requires in one
of the caches that she is covered by. We develop low-complexity asynchronous distributed cooperative content placement caching algorithms that require communication only between caches with overlapping coverage areas. In the basic algorithm, at each iteration a cache will selfishly update its cache content by minimizing the local miss probability and by considering the content stored by neighbouring caches.
We provide a game theoretic perspective on our algorithm and relate the algorithm to the best response dynamics in a potential game.
We demonstrate that our algorithm has polynomial step update complexity (in network and catalog size)
and has overall convergence in polynomial time. This does not happen in general in potential games.
We also provide two simulated annealing-type algorithms (stochastic and deterministic) that find the best equilibrium
corresponding to the global minimum of the miss probability. Finally, we illustrate our results by a number of numerical
results with synthetic and real world network models.

To specify, our contributions are as follows:
\begin{itemize}
\item We provide a distributed asynchronous algorithm for optimizing the content placement which can be interpreted
     as giving the best response dynamics in a potential game;
\item We prove that the best response dynamics can be obtained as a solution of a convex optimization problem;
\item We prove that our algorithm converges and establish polynomial bounds (in terms of network as well as catalog size) on the running time and the complexity per iteration;
\item We evaluate our algorithm through numerical examples using a homogeneous spatial Poisson process and base station locations from a real wireless network for the cellular network topology. We study the miss probability evolution on real and synthetic networks numerically and show that our distributed caching algorithm performs significantly better than storing the most popular content or probabilistic content placement policies or adhoc multi-LRU cooperative policy. We observe that as the coordination between caches increases, our distributed caching algorithm's performance significantly improves;
\item In fact, we demonstrate that in most cases of practical interest the algorithm based on best response
    converges to the globally optimal content placement;
\item Finally, we present simulated annealing type extensions of our algorithm that converge to the globally optimal solution.
    Our simulated annealing algorithms have efficient practical implementations that we illustrate
    by numerical examples.
\end{itemize}

Let us outline the organization of the paper.
In Section~\ref{sec:related}, we provide a thorough review of the relevant works. In Section~\ref{sec:model} we give the formal model and problem definitions. In Section~\ref{sec:potentialgame} we provide the game formulation of the problem, analyze the structures of the best response dynamics and Nash equilibria, provide bounds on the rate of convergence and analyze the computational complexity of the content placement game. In Section~\ref{sec:globalopt} we give a content placement game example converging to local optimum and provide some remedial algorithms, such as stochastic and deterministic simulated annealing, to achieve global optimum. In Section~\ref{sec:performance}, we present practical implementations of our low-complexity algorithms and show the resulting performances for various network topologies. In Section~\ref{sec:discussion} we conclude the paper with some discussions and provide an outlook on future research.

%
%
%
%
%
\section{Related work}\label{sec:related}

Caching has received a lot of attention in the literature.
Here we provide an overview of the work that is most closely related to the current paper.
Namely, we survey the works about systems (networks) of caches.

Building upon the approximation technique from Dan and Towsley \cite{DT90}, in \cite{Retal10}
Rosensweig et al. proposed an approximation technique for a network of caches with general
topology. Unfortunately, it is not easy to provide the performance guarantees of that
approximation. Using the characteristic time approximation \cite{F77} (see also \cite{Fetal12a,Fetal12b}),
Che et al. \cite{CTW02} provide a very accurate approximation for the cache networks with tree hierarchy.
The characteristic time approximation is intimately related to the TTL-cache systems
\cite{NFetal12,NFetal14a,NFetal14b}.
Then, in \cite{GLM16} Garetto et al. have shown how one can extend and refine the
characteristic time approximation technique to describe quite general networks of caches.
The recent upsurge in interest in the analysis of cache networks is motivated
by two important application domains: Content Delivery Networks (CDN) \cite{BRS08,BGW10,Jetal12}
and Information Centric Networks (ICN) \cite{Cetal12,Retal13,ZLL13}.

By now there is a significant body of literature on caching in wireless networks.
A general outline of a distributed caching architecture for wireless networks has
been presented in a series of works~\cite{femto12,femtod2d,femtocaching}.
Specifically, Shanmugam et al.~\cite{femtocaching} consider a model in which a bipartite graph indicates how users are connected to base stations. It is shown that minimizing delay by optimally placing files in the base stations is an
NP-complete problem and a factor $1/2$ approximation algorithm is developed. Furthermore, it is shown that a coded placement can be obtained through linear programming. Poularakis et al.~\cite{approximation} provide an approximation algorithm for the uncoded placement problem by establishing a connection to facility location problems.

Going beyond modelling the geometry of the problem by a bipartite graph,~\cite{diststorage},~\cite{optimalgeographic},  and~\cite{optimizationofcaching} consider placement of base stations in the plane according to a stochastic geometry. These works consider a probabilistic content placement strategy in which each base station independently of the other base stations stores a random subset of the files. In~\cite{diststorage} coded and uncoded placement are compared for various performance measures. In~\cite{optimizationofcaching} dynamic programming approach is developed to find the optimal coded placement.
The authors of \cite{optimalgeographic} show how the placement with the average storage capacity constraint can be
mapped to the placement with the hard storage constraint.

Similar to the current work,~\cite{Chattopadhyay:2016:Gibbsian} considers optimal uncoded content placement when caches are located at arbitrary positions. It is argued that this is an NP-complete problem and a Gibbs sampling based approach is developed that converges
to the globally optimal placement strategy. However, there are important differences between \cite{Chattopadhyay:2016:Gibbsian} and
the present work. In the present work we deal directly with the miss probability minimization and cast the problem into the framework
of potential games. This allows us to obtain an algorithm that converges in polynomial time to a Nash equilibrium.
Of course, we cannot guarantee that our basic algorithm converges to the best Nash equilibrium but in most practical scenarii
it does so. We also show that when our algorithm converges to a local optimum the resulting performance gap in
comparison with the global optimum is very small. Finally, to find global optimum, we provide generalized algorithms based on
stochastic \cite{Hajek} and deterministic \cite{R98} annealing.

In~\cite{Shen:Stackelberg} game theory is used to establish incentives for users in the network to locally perform caching.
Bastug et al.~\cite{cacheenabled} couple the caching problem with the physical layer, considering the SINR and the target bit rate. In~\cite{MN:cacheaided} it is demonstrated that caching in base station can increase the efficiency in the wireless downlink by enabling cooperation between base stations. In~\cite{Sengupta:Cloud} and~\cite{Goseling:FogRAN} caching at the base stations in a FOG-RAN architecture is considered and bounds on the minimum delivery latency are established. Dehghan et al.~\cite{utility} propose utility-driven caching and develop online algorithms that can be used by service providers to implement various caching policies based on arbitrary utility functions.
Neglia et al. \cite{NCM17} show that even linear utilities help to cover quite a number of interesting particular cases
of cache optimization. It is interesting to note that the authors of \cite{NCM17} have also used stochastic simulated annealing type
algorithms for the solution of the cache utility optimization problem.
Ioannidis et al.~\cite{distrcach} propose a mechanism for determining the caching policy of each mobile user that maximizes the system's social welfare in the absence of a central authority.
Mohari et al.~\cite{Metat14},  demonstrate the benefits of learning popularity distribution and it is a good direction for the extension of the present approach.

One more view on caching in networks is given by Maddah-Ali and Niesen~\cite{fundamental} who consider a model in which content is cached by users who are connected to the core network by a shared link and establish information-theoretic bounds on backhaul rate that is required to satisfy requests for files by users. In a related study, it is demonstrated~\cite{Zhang:CSITFeedback} that caching relaxes the constraints on the wireless channel state information that is required.

%
%
%
%
%
\section{Model and Problem Definition}\label{sec:model}
We consider a network of $N$ base stations that are located in the plane $\mathbb{R}^2$. We will use the notation $[1:N]=\{1,\dots,N\}$ and $\Theta = \mathbb{P}\left([1:N]\right) \setminus \emptyset$, where $\mathbb{P}\left([1:N]\right)$ is the power set of $[1:N]$. Each base station is covering a certain region of the plane and we specify the geometric configuration of the network through $A_s$, $s \in \Theta$, which denotes the area of the plane that is covered only by the caches in subset $s$, namely $A_s = (\cap_{\ell \in s} \bar{A}_\ell) \cap (\cap_{\ell \not \in s} \bar{A}_\ell^c)$, where $\bar{A}_\ell$ is the complete coverage region of cache $\ell$.

As a special case we will consider the case that all base stations have the same circular coverage region with radius $r$. In this case we specify the location of each base station, with $x_m$ for the location of base station $m\in[1:N]$. We then obtain $\bar{A}_{m}$ as the disc of radius $r$ around $x_m$.

Each base station is equipped with a cache that can be used to store files from a content library $\mathcal{C} = \{c_1, c_2, \dots, c_J\}$, where $J < \infty$. Each element $c_j$ is represents a file. All files in the content library are assumed to have the same size. Caches have capacity $K$, meaning that they can store $K$ files. For clarity of presentation, we assume homogeneous capacity for the caches. However, our work can immediately be extended to the network topologies where caches have different capacities.

Our interest will be in a user in a random location in the plane, uniformly distributed over the area that is covered by the $N$ base stations, \ie uniformly distributed in $A_{cov} = \cup_{s \in \Theta} A_s.$ The probability of a user in the plane being covered by caches $s\in\Theta$ (and is not covered by additional caches) is denoted by $p_s = \vert A_s\vert / \vert A_{cov}\vert$. A user located in $A_s$, $s \in \Theta$ can connect to all caches in subset $s$.

The user requests one of the files from the content library. The aim is to place content in the caches ahead of time in order to maximize the probability that the user will find the requested file in one of the caches that it is covered by.
The probability that file $c_j$ is requested is denoted as $a_j$. Without loss of generality, $a_1 \geq a_2 \geq \dots \geq a_J$. Even though any popularity distribution can be used, most of our numerical results will be based on the Zipf distribution. Newman shows that the probability of requesting a specific file from Internet content, \ie the popularity distribution of a content library, can be approximated by using the Zipf distribution \cite{newman} . The probability that a user will ask for content $c_j$ is then equal to
\begin{equation}
a_j = \frac{j^{-\gamma}}{\sum_{j=1}^J j^{-\gamma}}, \label{zipfpars}
\end{equation}
where $\gamma > 0$ is the Zipf parameter.

Content is placed in caches using knowledge of the request statistics $a_1,\dots,a_J$, but without knowing the actual request made by the user.
We denote the placement policy for cache $m$ as
\begin{align}
b_j^{(m)} := \left\{
\begin{array}{rl}
1, & \text{if } c_j \text{ is stored in cache $m$},\\
0, & \text{if } c_j \text{ is not stored in cache $m$},
\end{array} \right.
\end{align}
and the overall placement strategy for cache $m$ as $\mathbf{b}^{(m)} = \left[b_1^{(m)}, \dots, b_J^{(m)}\right]$ as a J-tuple. The overall placement strategy for the network is denoted by $\mathbf{B} = \left[\mathbf{b}^{(1)}; \dots; \mathbf{b}^{(N)}\right]$
as an $J \times N$ matrix.

Our performance metric $f(\mathbf{B})$ is the probability that the user does not find the requested file in one of the caches
that she is covered by, i.e.,
\begin{equation}
f\left(\mathbf{B}\right) = \sum_{j = 1}^J a_j \sum_{s \in \Theta} p_s \prod_{\ell \in s}(1 - b_j^{(\ell)}).
\label{missprob}
\end{equation}

And our goal is to find the optimal placement strategy minimizing the total miss probability as follows:
\begin{prb}
\label{orgprb}
\begin{align}
&\min \text{ } f\left(\mathbf{B}\right)\nonumber\\
&\text{ }\mathbf{s.t.}\quad  b_1^{(m)} + \dots + b_J^{(m)} = K, \quad b_j^{(m)}  \in \{0,1\},\quad \forall j, m. \label{constraints}
\end{align}
\end{prb}

We will provide a distributed asynchronous algorithm to address Problem~\ref{orgprb} in which we iteratively update the placement policy at each cache. We will see that this algorithm can be viewed as the best response dynamics in a potential game. We will make use of the following notation. Denote by $\mathbf{b}^{(-m)}$ the placement policies of all players except player $m$. We will write $f(\mathbf{b}^{(m)},\mathbf{b}^{(-m)})$ to denote $f\left(\mathbf{B}\right)$. Also, for the sake of simplicity for the potential game formulation that will be presented in the following section, let $f^{(m)}$ denote the miss probability for a user that is located uniformly at random within the coverage region of cache $m$, \ie
\begin{align}
f^{(m)}\left(\mathbf{B}\right) &= \sum_{j=1}^J a_j (1-b_j^{(m)}) \sum_{\substack{s \in \Theta \\ m \in s}} p_s \prod_{\ell \in s \setminus \{m\}}(1 - b_j^{(\ell)})\nonumber\\
&= \sum_{j=1}^J a_j (1-b_j^{(m)}) q_m(j),
\label{localmiss}
\end{align}
where
\begin{equation} \label{eq:qdef}
q_m(j) = \sum_{\substack{s \in \Theta \\ m \in s}} p_s \prod_{\ell \in s \setminus \{m\}}(1 - b_j^{(\ell)}).
\end{equation}

%
%
%
%
%
\section{Potential Game Formulation} \label{sec:potentialgame}
In this section we provide a distributed asynchronous algorithm to address Problem~\ref{orgprb} in which we iteratively update
the placement policy at each cache. We will see that this algorithm can be formulated as providing the best response dynamics in a potential game.

The basic idea of our algorithm is that each cache tries
selfishly to optimize the payoff function $f^{(m)}(\mathbf{b}^{(m)})$ defined in (\ref{localmiss}). That is, given a placement $\mathbf{b}^{(-m)}$ by the other caches, cache $m$ solves for $\mathbf{b}^{(m)}$ in
\begin{prb}
\label{modprb}
\begin{align}
&\min \text{ } f^{(m)}\left(\mathbf{b}^{(m)},\mathbf{b}^{(-m)}\right)\nonumber\\
&\text{ }\mathbf{s.t.}\quad b_1^{(m)} + \dots + b_J^{(m)}  = K, \quad b_j^{(m)}  \in \{0,1\},\quad \forall j. \label{constraints}
\end{align}
\end{prb}

Each cache continues to optimize its placement strategy until no further improvements can be made, that is until
no player can take an advantage from the other players. At this point, $\mathbf{B}$ is a {\it Nash equilibrium} strategy, satisfying
\begin{equation}
f^{(m)}(\mathbf{b}^{(m)},\mathbf{b}^{(-m)}) \le f^{(m)}(\mathbf{\tilde b}^{(m)},\mathbf{b}^{(-m)}),
\quad \forall m, \mathbf{\tilde b}^{(m)}.
\end{equation}
We will refer to this game as the {\em content placement game} and demonstrate in the next subsection that the introduced game
is a potential game~\cite{MS96} with many nice properties.

\subsection{Convergence analysis}
\label{subsec:convergence}
In this subsection we prove that if we allow caches to repeatedly update their caches we are guaranteed to converge to a Nash equilibrium in finite time. The order in which caches are scheduled to update their strategy is not important, as long as all caches are scheduled infinitely often.
\begin{thm}
\label{thm:potentialgame}
The content placement game defined by payoff functions (\ref{localmiss}) is a potential game with the potential
function given in (\ref{missprob}). Furthermore, if we schedule each cache infinitely often,
the best response dynamics converges to a Nash equilibrium in finite time.
\end{thm}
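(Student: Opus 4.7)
The plan is to split the proof into two parts: (i) verify the exact potential game property by writing $f$ as a sum of one term that coincides with the player's local cost $f^{(m)}$ and one that is independent of player $m$'s action; (ii) derive convergence from the finite improvement property that any exact potential game with finite strategy spaces enjoys.

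For part (i), I would start from (\ref{missprob}) and partition the sum over $s \in \Theta$ into subsets containing cache $m$ and subsets not containing $m$. For each $j$, the subsets with $m \in s$ contribute a factor $(1-b_j^{(m)})$ that can be pulled out of the product, while the subsets with $m \notin s$ contribute a quantity $h_j^{(-m)}(\mathbf{b}^{(-m)}) = \sum_{s \in \Theta,\, m \notin s} p_s \prod_{\ell \in s}(1-b_j^{(\ell)})$ that does not depend on $\mathbf{b}^{(m)}$. Comparing with (\ref{localmiss}) and (\ref{eq:qdef}), this gives the decomposition
\begin{equation*}
f(\mathbf{b}^{(m)},\mathbf{b}^{(-m)}) \;=\; f^{(m)}(\mathbf{b}^{(m)},\mathbf{b}^{(-m)}) \;+\; \sum_{j=1}^{J} a_j\, h_j^{(-m)}(\mathbf{b}^{(-m)}).
\end{equation*}
Because the second term is invariant under any unilateral deviation by player $m$, for every player $m$ and every $\tilde{\mathbf{b}}^{(m)}$ one obtains
\begin{equation*}
f^{(m)}(\mathbf{b}^{(m)},\mathbf{b}^{(-m)}) - f^{(m)}(\tilde{\mathbf{b}}^{(m)},\mathbf{b}^{(-m)}) \;=\; f(\mathbf{b}^{(m)},\mathbf{b}^{(-m)}) - f(\tilde{\mathbf{b}}^{(m)},\mathbf{b}^{(-m)}),
\end{equation*}
which is exactly the defining identity of an exact (cost) potential game with potential $f$ given in (\ref{missprob}).

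For part (ii), I would invoke the standard finite improvement argument. The joint strategy space is finite because each cache has exactly $\binom{J}{K}$ feasible placements. Whenever an asynchronous best-response update by cache $m$ strictly improves its own cost $f^{(m)}$, the identity above implies that the global potential $f$ strictly decreases by the same amount. Since $f$ takes only finitely many values on the finite strategy space, only finitely many strict improvements are possible, so best-response dynamics must reach a configuration at which no scheduled cache can improve. The assumption that every cache is scheduled infinitely often guarantees that this configuration is in fact a Nash equilibrium: if some cache $m$ had a profitable deviation, it would be activated at some later time and would strictly decrease the potential, contradicting that no further improvements occur.

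The main subtlety, rather than an obstacle, is making the potential-function identity watertight in the first step: one has to be careful that when cache $m$ updates its entire vector $\mathbf{b}^{(m)}$ (not just a single coordinate), the residual term really depends only on $\mathbf{b}^{(-m)}$, which is where the factorization of the product $\prod_{\ell \in s}(1-b_j^{(\ell)})$ with respect to whether $m\in s$ is essential. The finite-time claim then follows, and a quantitative bound on the number of rounds is deferred to the subsequent complexity analysis in the paper.
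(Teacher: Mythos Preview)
Your proposal is correct and follows essentially the same approach as the paper: both arguments hinge on splitting the sum over $s\in\Theta$ according to whether $m\in s$, observing that the $m\notin s$ part is independent of $\mathbf{b}^{(m)}$, and then invoking the finite improvement property for convergence. Your decomposition $f = f^{(m)} + (\text{term independent of }\mathbf{b}^{(m)})$ is a slightly cleaner packaging than the paper's direct computation of both differences, but the mathematical content is identical.
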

\begin{proof}
To show that the game is potential with the potential function $f(\mathbf{B})$, we need to check
that
$$
f^{(m)}(\mathbf{\tilde b}^{(m)},\mathbf{b}^{(-m)}) - f^{(m)}(\mathbf{b}^{(m)},\mathbf{b}^{(-m)})=
f(\mathbf{\tilde b}^{(m)},\mathbf{b}^{(-m)}) - f(\mathbf{b}^{(m)},\mathbf{b}^{(-m)}).
$$
Now,
\begin{align*}
f^{(m)}(\mathbf{\tilde b}^{(m)},\mathbf{b}^{(-m)}) - f^{(m)}(\mathbf{b}^{(m)},\mathbf{b}^{(-m)}) &=\sum_{j=1}^J a_j (1-\tilde{b}_j^{(m)}) \sum_{\substack{s \in \Theta \\ m \in s}} p_s \prod_{\ell \in s\setminus \{m\}}(1 - b_j^{(\ell)})\\
&- \sum_{j=1}^J a_j (1-{b}_j^{(m)}) \sum_{\substack{s \in \Theta \\ m \in s}} p_s \prod_{\ell \in s\setminus \{m\}}(1 - b_j^{(\ell)})\\
&= \sum_{j=1}^J a_j \left({b}_j^{(m)} -\tilde{b}_j^{(m)} \right) \sum_{\substack{s \in \Theta \\ m \in s}} p_s \prod_{\ell \in s\setminus \{m\}}(1 - b_j^{(\ell)}),
\end{align*}
and
\begin{align*}
f(\mathbf{\tilde b}^{(m)},\mathbf{b}^{(-m)}) - f(\mathbf{b}^{(m)},\mathbf{b}^{(-m)}) &= \sum_{j = 1}^J a_j \sum_{s \in \Theta} p_s \prod_{\ell \in s}(1 - \tilde{b}_j^{(\ell)})- \sum_{j = 1}^J a_j \sum_{s \in \Theta} p_s \prod_{\ell \in s}(1 - b_j^{(\ell)}).
\end{align*}
Since $f(\mathbf{\tilde b}^{(m)},\mathbf{b}^{(-m)}) - f(\mathbf{b}^{(m)},\mathbf{b}^{(-m)}) = 0$ when $m \not\in s$,
\begin{align*}
f(\mathbf{\tilde b}^{(m)},\mathbf{b}^{(-m)}) - f(\mathbf{b}^{(m)},\mathbf{b}^{(-m)})&= \sum_{j = 1}^J a_j \sum_{\substack{s \in \Theta \\ m \in s}} p_s \prod_{\ell \in s}(1 - \tilde{b}_j^{(\ell)})- \sum_{j = 1}^J a_j \sum_{\substack{s \in \Theta \\ m \in s}} p_s \prod_{\ell \in s}(1 - b_j^{(\ell)})\\
&= \sum_{j=1}^J a_j (1-\tilde{b}_j^{(m)}) \sum_{\substack{s \in \Theta \\ m \in s}} p_s \prod_{\ell \in s\setminus \{m\}}(1 - b_j^{(\ell)})\\
&- \sum_{j=1}^J a_j (1-{b}_j^{(m)}) \sum_{\substack{s \in \Theta \\ m \in s}} p_s \prod_{\ell \in s\setminus \{m\}}(1 - b_j^{(\ell)})\\
&= \sum_{j=1}^J a_j \left({b}_j^{(m)} -\tilde{b}_j^{(m)} \right) \sum_{\substack{s \in \Theta \\ m \in s}} p_s \prod_{\ell \in s\setminus \{m\}}(1 - b_j^{(\ell)})\\
&= f^{(m)}(\mathbf{\tilde b}^{(m)},\mathbf{b}^{(-m)}) - f^{(m)}(\mathbf{b}^{(m)},\mathbf{b}^{(-m)}),
\end{align*}
which completes the proof of the first statement.

Since we only have a finite number of placement strategies, will not miss any cache in the long-run
and in a potential game each non-trivial best response provides a positive improvement in the potential function,
we are guaranteed to converge to a Nash equilibrium in finite time.
\end{proof}

\subsection{Structure of the best response dynamics}
\label{subsec:local}
In this subsection we will analyze the structure of the best response dynamics. More precisely, we demonstrate that a solution to Problem~\ref{modprb} can be obtained by solving a convex, in fact linear, optimization problem and we provide the solution in closed form.

First we present the relaxed version of Problem \ref{modprb} as follows. Given a placement $\mathbf{b}^{(-m)}$ by the other caches, cache $m$ solves for $\mathbf{b}^{(m)}$ in
\begin{prb}
\label{tildmodprb}
\begin{align}
&\min \text{ } f^{(m)}\left(\mathbf{b}^{(m)},\mathbf{b}^{(-m)}\right)\nonumber\\
&\text{ }\mathbf{s.t.}\quad b_1^{(m)} + \dots + b_J^{(m)}  = K, \quad b_j^{(m)}  \in [0,1],\quad \forall j. \label{tildconstraints}
\end{align}
\end{prb}

Note that in Problem \ref{tildmodprb}, $b_j^{(m)}$ can now take values from the interval $[0,1]$ instead of the set $\{0,1\}$ which allows us to present the following lemma.
\begin{lem}
\label{convex}
Problem \ref{tildmodprb} is a convex, in fact linear, optimization problem.
\end{lem}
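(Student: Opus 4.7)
The plan is to show that once the placement $\mathbf{b}^{(-m)}$ of the other caches is fixed, the objective $f^{(m)}(\mathbf{b}^{(m)},\mathbf{b}^{(-m)})$ is an affine function of $\mathbf{b}^{(m)}$, and the feasible set is a polytope, so that Problem~\ref{tildmodprb} is a linear program.

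First, I would isolate the dependence on $\mathbf{b}^{(m)}$. Using the expression (\ref{localmiss}), the objective reads $\sum_{j=1}^J a_j (1-b_j^{(m)}) q_m(j)$, where $q_m(j)$ is given by (\ref{eq:qdef}). The crucial observation is that $q_m(j)$ involves only the coverage probabilities $p_s$ and the variables $b_j^{(\ell)}$ with $\ell \neq m$; hence from the point of view of player $m$ it is a nonnegative constant. Therefore the objective can be written as $C - \sum_{j=1}^J a_j q_m(j)\, b_j^{(m)}$, where $C = \sum_{j=1}^J a_j q_m(j)$ does not depend on $\mathbf{b}^{(m)}$. This is affine, hence both convex and concave, in $\mathbf{b}^{(m)}$.

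Next, I would verify that the feasible set is convex, in fact a polytope: the single equality $\sum_{j=1}^J b_j^{(m)} = K$ defines a hyperplane, and the relaxed constraints $b_j^{(m)} \in [0,1]$ define a box, so their intersection is a bounded polytope. Minimizing a linear functional over a polytope is by definition a linear program, and every linear program is a convex optimization problem, which proves the lemma.

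I do not expect any real obstacle: the entire argument rests on the fact that $q_m(j)$ is constant in $\mathbf{b}^{(m)}$, which is visible directly from (\ref{eq:qdef}). The only thing worth being careful about is making explicit that the product structure $\prod_{\ell \in s}(1-b_j^{(\ell)})$ in the global miss probability (\ref{missprob}) would be nonlinear in $\mathbf{B}$ as a whole, so the linearity is a genuine consequence of fixing $\mathbf{b}^{(-m)}$ and playing only in $\mathbf{b}^{(m)}$ — a point worth stressing since it is what makes the per-cache best response tractable.
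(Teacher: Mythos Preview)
Your argument is correct and is exactly what the paper intends: it simply states that the claim ``follows immediately from (\ref{localmiss}),'' and your proposal spells out that immediacy by noting that $q_m(j)$ depends only on $\mathbf{b}^{(-m)}$, so the objective is affine in $\mathbf{b}^{(m)}$ and the constraints define a polytope. There is nothing to add or correct.
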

\begin{proof}
It follows immediately from (\ref{localmiss}).
\end{proof}

We already showed that Problem~\ref{tildmodprb} is convex by Lemma~\ref{convex} and the constraint set is linear as given in \eqref{tildconstraints}. Thus KKT conditions provide necessary and sufficient conditions for optimality. The Lagrangian function corresponding to Problem~\ref{tildmodprb} becomes

\begin{align}
L\left(\mathbf{b}^{(m)}, \nu, \mathbf{\eta}, \mathbf{\omega}\right) = \sum_{j=1}^J a_j (1-b_j^{(m)}) q_m(j)+ \nu \left(\sum_{j=1}^J b_j^{(m)} - K\right) - \sum_{j=1}^J \eta_j b_j^{(m)}  + \sum_{j=1}^J \omega_j \left(b_j^{(m)} - 1\right), \nonumber\\
\end{align}
where $\mathbf{b}^{(m)}$, $\mathbf{\eta}$, $\mathbf{\omega} \in \mathbb{R}_+^J$ and $\nu \in \mathbb{R}$.

Let $\bar{\mathbf{b}}^{(m)}$, $\bar{\mathbf{\eta}}$, $\bar{\mathbf{\omega}}$ and $\bar{\nu}$ be primal and dual optimal. The KKT conditions for Problem \ref{tildmodprb} state that
\begin{align}
\sum_{j=1}^J \bar{b}^{(m)}_j &= K, \label{kkt2}\\
0 \leq \bar{b}^{(m)}_j &\leq 1, \quad \forall j = 1,\dots, J, \label{kkt1}\\
\bar{\eta}_j &\geq 0, \quad \forall j = 1,\dots, J,\label{kkt3}\\
\bar{\omega}_j &\geq 0,\quad \forall j = 1,\dots, J,\label{kkt4}\\
\bar{\eta}_j \bar{b}^{(m)}_j &= 0,\quad \forall j = 1,\dots, J,\label{kkt5}\\
\bar{\omega}_j \left(\bar{b}^{(m)}_j - 1\right) &= 0, \quad \forall j = 1,\dots, J,\label{kkt6}\\
-a_j q_m(j) + \bar{\nu} - \bar{\eta}_j + \bar{\omega}_j &= 0, \quad \forall j = 1,\dots, J \label{kkt7}.
\end{align}

The next result demonstrates that the optimal solution of the relaxed local optimization problem follows a threshold strategy for each cache. As in the global optimization case, files are ordered according to a function of the placement policies of the neighbouring caches and then the first $K$ files are stored. Contrary, to the case of global optimization, this solution is obtained explicitly, because the placement strategies of the other caches are assumed constant.
\begin{thm}
\label{PPPopt}
The optimal solution to Problem~\ref{tildmodprb} is given by
\begin{align}
\label{optsoleq}
\bar{b}^{(m)}_j  = \left\{
\begin{array}{rl}
1, & \text{if } \pi^{-1}_m(j) \leq K,\\
0, & \text{if } \pi^{-1}_m(j) > K,\\
\end{array} \right.
\end{align}
where $\pi_m: [1, J] \rightarrow [1,J]$ satisfies $a_{\pi_m(1)}q_m\left(\pi_m(1)\right) \geq a_{\pi_m(2)}q_m\left(\pi_m(2)\right) \geq \dots \geq a_{\pi_m(J)}q_m\left(\pi_m(J)\right)$.
\end{thm}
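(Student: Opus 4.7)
The plan is to verify the proposed threshold solution by checking that it, together with a suitable choice of dual multipliers, satisfies the KKT conditions (\ref{kkt2})--(\ref{kkt7}); since Lemma~\ref{convex} establishes that Problem~\ref{tildmodprb} is a linear (hence convex) program with affine constraints, the KKT conditions are both necessary and sufficient for optimality, so this verification will complete the proof.

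First I would rewrite the objective: minimizing $\sum_{j=1}^J a_j(1-b_j^{(m)}) q_m(j)$ subject to $\sum_j b_j^{(m)} = K$ is equivalent to maximizing $\sum_j a_j b_j^{(m)} q_m(j)$, and intuitively one should set $b_j^{(m)}=1$ on the $K$ largest values of $a_j q_m(j)$. This is exactly the candidate primal $\bar{\mathbf{b}}^{(m)}$ defined in (\ref{optsoleq}). Primal feasibility (\ref{kkt2})--(\ref{kkt1}) is immediate from the definition of $\pi_m$ and the fact that exactly $K$ coordinates are set to one.

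Next I would construct dual variables. Pick any threshold
\begin{equation*}
\bar{\nu}\in \bigl[\,a_{\pi_m(K+1)}q_m(\pi_m(K+1)),\; a_{\pi_m(K)}q_m(\pi_m(K))\,\bigr],
\end{equation*}
which is a nonempty interval by the ordering property of $\pi_m$. For indices $j$ with $\pi_m^{-1}(j)\le K$ (the stored files), set $\bar{\eta}_j=0$ and $\bar{\omega}_j = a_j q_m(j) - \bar{\nu}$; for indices $j$ with $\pi_m^{-1}(j)>K$ (the unstored files), set $\bar{\omega}_j=0$ and $\bar{\eta}_j = \bar{\nu} - a_j q_m(j)$. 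The stationarity condition (\ref{kkt7}) then holds by construction for every $j$. Dual feasibility (\ref{kkt3})--(\ref{kkt4}) holds by the choice of $\bar\nu$ relative to the ordered list $a_{\pi_m(\cdot)} q_m(\pi_m(\cdot))$. Complementary slackness (\ref{kkt5})--(\ref{kkt6}) holds because on each index at most one of $\bar{\eta}_j,\bar{\omega}_j$ is nonzero, and it is paired with the slack side of $\bar{b}^{(m)}_j\in\{0,1\}$.

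There is no real obstacle here; the only mild care needed is in the degenerate case where ties in the values $a_j q_m(j)$ straddle the $K$-th position, which is handled by picking $\pi_m$ consistently with the proposed support and choosing $\bar{\nu}$ equal to the common tied value. Once all KKT conditions are satisfied, sufficiency (from convexity established in Lemma~\ref{convex}) yields that $\bar{\mathbf{b}}^{(m)}$ solves Problem~\ref{tildmodprb}, completing the proof.
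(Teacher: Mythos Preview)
Your proposal is correct and uses essentially the same approach as the paper: both exploit the KKT conditions (\ref{kkt2})--(\ref{kkt7}) together with the linearity established in Lemma~\ref{convex}. The only stylistic difference is direction: the paper starts from the KKT system and \emph{derives} the threshold form (showing $\bar\nu<a_jq_m(j)\Rightarrow\bar b_j^{(m)}=1$ and $\bar\nu\ge a_jq_m(j)\Rightarrow\bar b_j^{(m)}=0$, then arguing a suitable $\bar\nu$ exists), whereas you \emph{posit} the threshold solution and explicitly construct $\bar\nu,\bar\eta,\bar\omega$ to verify KKT sufficiency. Your explicit choice of $\bar\nu\in[a_{\pi_m(K+1)}q_m(\pi_m(K+1)),\,a_{\pi_m(K)}q_m(\pi_m(K))]$ and the accompanying handling of ties is, if anything, slightly cleaner than the paper's existence argument via the step function $\psi$.
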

\begin{proof}
From \eqref{kkt5}, \eqref{kkt6} and \eqref{kkt7}, we have
\begin{equation}
\bar{\omega}_j = \bar{b}_j^{(m)} \left[a_j q_m(j) - \bar{\nu}\right], \label{omegaeq}
\end{equation}
which, when inserted into \eqref{kkt6}, gives
\begin{equation}
\bar{b}_j^{(m)}\left(\bar{b}_j^{(m)} - 1\right)\left[a_j q_m(j) - \bar{\nu}\right] = 0 \label{star}.
\end{equation}
If $\bar{\nu} < a_j q_m(j)$, we have
\begin{equation*}
\bar{\omega}_j = \bar{\eta}_j + a_j q_m(j) - \bar{\nu} > 0.
\end{equation*}
Thus, from \eqref{kkt6}, we have $\bar{b}_j^{(m)} = 1$. Similarly, if $\bar{\nu} \geq a_j q_m(j)$, we have
\begin{equation*}
\bar{\eta}_j = \bar{\omega}_j +  \bar{\nu} - a_j q_m(j) > 0.
\end{equation*}
Hence, from \eqref{kkt5}, we have $\bar{b}_j^{(m)} = 0$.

For notational convenience we introduce the functions $\psi_j$: $\mathbb{R} \rightarrow [0, 1]$, $j = 1, \dots, J$ as follows
\begin{align}
\label{gfunc}
\psi_j(\nu) = \left\{
\begin{array}{rl}
1, & \text{if } \nu <a_j q_m(j)\\
0, & \text{if } \nu \geq a_j q_m(j).\\
\end{array} \right.
\end{align}
We also define $\psi: \mathbb{R} \rightarrow [0, K]$, where $\psi(\nu) = \sum_{j=1}^J \psi_j(\nu)$. Note that $\psi(\nu) = K$ for $\nu \in \left(-\infty,  a_j q_m(j)\right)$, and $\psi(\nu) = 0$ for $\nu \in \left[a_j q_m(j), \infty\right)$.

It is possible to check for all possible combinations from the file set $[1,J]$ to $[1,J]$ to confirm if the condition given in \eqref{gfunc} is satisfied. In order to satisfy the capacity constraint \eqref{kkt2} the above solution is guaranteed to exist. The proof is completed by validating that with the strategy above $\bar{\nu}$ is satisfying $\psi(\bar{\nu}) = K$.
\end{proof}

The intuition behind Theorem~\ref{PPPopt} is that we order the files according to a measure $a_j q_m(j)$, $j=1,\dots,J$, where we recall from~\eqref{eq:qdef} that
\begin{equation}
q_m(j) = \sum_{\substack{s \in \Theta \\ m \in s}} p_s \prod_{\ell \in s \setminus \{m\}}(1 - b_j^{(\ell)}).
\end{equation}
The measure $a_j q_m(j)$ includes file popularity $a_j$ and also takes into account if neighbours are already storing file $j$ through $q_m(j)$. The factor $q_m(j)$ takes into account the area of overlap in the coverage region with the neighbours through $p_s$. After ordering the files we store the $K$ `most popular' files according to the measure $a_j q_m(j)$.

Next we demonstrate that Theorem~\ref{PPPopt} provides an optimal solution to Problem~\ref{modprb}.
\begin{thm}
\label{thm:contrelax}
The optimal solution given in Theorem~\ref{PPPopt} is a solution to Problem~\ref{modprb}.
\end{thm}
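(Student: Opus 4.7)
The plan is to use the standard LP-relaxation argument: Problem~\ref{modprb} and Problem~\ref{tildmodprb} share the same objective function and cardinality constraint, and differ only in whether each $b_j^{(m)}$ is restricted to $\{0,1\}$ or allowed to range over $[0,1]$. Since $\{0,1\} \subset [0,1]$, the feasible set of Problem~\ref{modprb} is contained in the feasible set of Problem~\ref{tildmodprb}, so the optimal value of the relaxation is a lower bound on the optimal value of the integer problem.

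First I would observe that the closed-form solution supplied by Theorem~\ref{PPPopt} takes values only in $\{0,1\}$: by construction $\bar b_j^{(m)}$ equals $1$ on exactly the $K$ indices $j$ with $\pi_m^{-1}(j)\le K$ and $0$ otherwise. Consequently $\sum_j \bar b_j^{(m)} = K$ and every entry is binary, so $\bar{\mathbf{b}}^{(m)}$ is feasible for Problem~\ref{modprb}.

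Next, by Theorem~\ref{PPPopt}, $\bar{\mathbf{b}}^{(m)}$ attains the minimum of $f^{(m)}$ over the larger feasible set of Problem~\ref{tildmodprb}. Combined with the containment of feasible sets noted above, this means $\bar{\mathbf{b}}^{(m)}$ attains a value no larger than the value of any feasible point of Problem~\ref{modprb}. Being itself feasible for Problem~\ref{modprb}, it must therefore be an optimal solution of Problem~\ref{modprb}, which is exactly the claim.

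There is no real technical obstacle here; the only thing to be slightly careful about is to make explicit that the solution produced by Theorem~\ref{PPPopt} genuinely lies in $\{0,1\}^J$ and satisfies the cardinality constraint, so that the feasible-set inclusion argument applies without any rounding step.
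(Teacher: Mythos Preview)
Your proposal is correct and follows essentially the same approach as the paper: relax to the continuous problem, observe that the optimal solution of the relaxation given by Theorem~\ref{PPPopt} is integral and hence feasible for the original integer problem, and conclude optimality for Problem~\ref{modprb}. Your version is in fact more carefully spelled out than the paper's, making explicit the feasible-set inclusion and the resulting lower-bound argument.
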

\begin{proof}
We applied relaxation on Problem~\ref{modprb} and presented the convex, and in fact the linear version of it by Problem~\ref{tildmodprb}. With this relaxation we provide allowance on $b_j^{(m)}$ values to take values between the interval $[0,1]$ instead of simply taking values from the set $\{0,1\}$. Having solved the problem, the optimal solution given in Theorem~\ref{PPPopt} provided a combinatorial structure on $b_j^{(m)}$, hence the solution also applies to Problem~\ref{modprb}.
\end{proof}

Since our best response update can be solved as a convex and linear optimization problem,
it can be done in polynomial time (see e.g.,~\cite{AGH03},\cite{NN94}).

Furthermore, following the approach in \cite{Aetal15} we can show that $\epsilon$-Nash
equilibrium can be achieved relatively fast. An $\epsilon$-Nash equilibrium is characterized
by
$$
f^{(m)}(\mathbf{b}^{(m)},\mathbf{b}^{(-m)}) \le f^{(m)}(\mathbf{\tilde b}^{(m)},\mathbf{b}^{(-m)})-\epsilon,
\quad \forall m, \mathbf{\tilde b}^{(m)}.
$$
At each improvement we aim to decrease the potential function by at least $\epsilon$. If no
player can make a move decreasing the potential by at least $\epsilon$ we stop and
the reached profile corresponds to the $\epsilon$-Nash equilibrium. Then, it will take
no more than $1/\epsilon$ to reach the $\epsilon$-Nash equilibrium. In particular, if we set
the value of $\epsilon$ less or equal to the minimal improvement value provided in
Lemma~\ref{lem:improvebound}, we actually reach the exact Nash equilibrium.

\subsection{Structure of Nash equilibria}
\label{subsec:nash}
In this subsection we provide insight into the structure of the Nash equilibria of the content placement game. We know from the previous subsection that this game is a potential game. The Nash equilibria, therefore, correspond to placement strategies $\mathbf{B}$ that satisfy the Karush-Kuhn-Tucker conditions of Problem~\ref{orgprb}.

The next result demonstrates that the optimal solution of the relaxed problem follows a threshold strategy for each cache. First all files are ordered according to a function of the placement policies of the neighbouring caches and then the first $K$ files are stored.
\begin{thm}
\label{PPPoptorg}
Let $\mathbf{\bar B}$ denote a content placement strategy at a Nash equilibrium of the content placement game. Then
\begin{align}
\label{optsoleqorg}
\bar{b}^{(m)}_j  = \left\{
\begin{array}{rl}
1, & \text{if } \bar{\pi}^{-1}_m(j) \leq K,\\
0, & \text{if } \bar{\pi}^{-1}_m(j) > K,\\
\end{array} \right.
\end{align}
where $\bar{\pi}_m: [1, J] \rightarrow [1,J]$ satisfies
\begin{align*}
a_{\bar{\pi}_m(1)}\sum_{\substack{s \in \Theta \\ m \in s}} p_s \prod_{\ell \in s \setminus \{m\}}(1 - \bar{b}_{\bar{\pi}_m(1)}^{(\ell)})
\geq a_{\bar{\pi}_m(2)} \sum_{\substack{s \in \Theta \\ m \in s}} p_s &\prod_{\ell \in s \setminus \{m\}}(1 - \bar{b}_{\bar{\pi}_m(2)}^{(\ell)})\\
&\hspace{0.6cm}\geq \dots
\geq a_{\bar{\pi}_m(J)} \sum_{\substack{s \in \Theta \\ m \in s}} p_s \prod_{\ell \in s \setminus \{m\}}(1 - \bar{b}_{\bar{\pi}_m(J)}^{(\ell)}),\quad\forall m = 1,\dots,N.
\end{align*}
\end{thm}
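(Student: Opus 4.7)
The plan is to derive this as a direct corollary of Theorem~\ref{PPPopt} and Theorem~\ref{thm:contrelax}, by observing that the Nash equilibrium condition is precisely the statement that each cache is simultaneously playing its best response. First I would recall the definition: at a Nash equilibrium $\mathbf{\bar B}$, for every $m$ the strategy $\bar{\mathbf{b}}^{(m)}$ must satisfy
\[
f^{(m)}(\bar{\mathbf{b}}^{(m)},\bar{\mathbf{b}}^{(-m)}) \le f^{(m)}(\tilde{\mathbf{b}}^{(m)},\bar{\mathbf{b}}^{(-m)}), \quad \forall \tilde{\mathbf{b}}^{(m)},
\]
subject to the feasibility constraints in Problem~\ref{modprb}. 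In other words, $\bar{\mathbf{b}}^{(m)}$ is an optimal solution of Problem~\ref{modprb} when the placement of the other caches is fixed to $\bar{\mathbf{b}}^{(-m)}$.

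Next I would invoke Theorem~\ref{thm:contrelax}, which says that any solution of Problem~\ref{modprb} is obtained from the closed-form threshold solution of the relaxed linear program in Theorem~\ref{PPPopt}. Applying that characterization with $\mathbf{b}^{(-m)} = \bar{\mathbf{b}}^{(-m)}$, I conclude that $\bar{\mathbf{b}}^{(m)}$ stores exactly the $K$ files with the largest values of $a_j q_m(j)$, where
\[
q_m(j) = \sum_{\substack{s \in \Theta \\ m \in s}} p_s \prod_{\ell \in s \setminus \{m\}}(1 - \bar{b}_j^{(\ell)}).
\]
Substituting this expression into the ordering $a_{\bar{\pi}_m(1)} q_m(\bar{\pi}_m(1)) \ge \dots \ge a_{\bar{\pi}_m(J)} q_m(\bar{\pi}_m(J))$ yields exactly the inequality chain in the statement of the theorem.

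There is no real obstacle here; the only subtlety to flag is that the ordering $\bar{\pi}_m$ is now defined self-consistently in terms of the equilibrium profile itself (since $q_m(j)$ depends on $\bar{\mathbf{b}}^{(-m)}$), whereas in Theorem~\ref{PPPopt} it was defined with respect to a fixed external placement. This is not a difficulty in the proof, since the Nash equilibrium condition must hold for every $m$ simultaneously and therefore the characterization of Theorem~\ref{PPPopt} applies cache by cache to the equilibrium profile. I would close the argument by noting that the universal quantification $\forall m = 1,\dots,N$ in the theorem statement is exactly the collection of $N$ best-response conditions defining the Nash equilibrium.
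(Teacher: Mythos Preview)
Your proposal is correct and matches the paper's own approach, which simply states that the proof is ``similar to the proof of Theorem~\ref{PPPopt}'' with \eqref{optsoleq} required to hold for all $1\le m\le N$ simultaneously. One small wording issue: Theorem~\ref{thm:contrelax} as stated only asserts that the threshold solution \emph{is} optimal for Problem~\ref{modprb}, not that every optimizer has threshold form; the needed direction follows because the LP relaxation attains an integer optimum, so any integer optimizer is also LP-optimal and hence of threshold type for a suitable tie-breaking permutation $\bar{\pi}_m$.
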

\begin{proof}
The proof is similar to the proof of Theorem~\ref{PPPopt}, where in this case \eqref{optsoleq} must hold for all $1\leq m\leq N$ simultaneously. The detailed analysis is omitted due to space restrictions.
\end{proof}

%
%
%
%
%

\subsection{Complexity analysis}
\label{subsec:complexity}
In this subsection we provide a bound on the number of iterations that is required to converge to a Nash equilibrium of the content placement game. Also, we provide a bound on the computational complexity of each iteration.

Let us consider {\em discrete placement} of caches in the plane. More precisely, the locations of caches are restricted to $d\mathbb{Z}=\{ (i_1 d, i_2 d)\ |\ i_1, i_2\in\mathbb{Z}\}$, where $d$ is the minimum possible distance between caches. Also, the coverage area of each cache is assumed to be the disk of radius $r$ around the location of the cache. The assumption of discrete placement is not restrictive and will, in fact, be satisfied in practical scenarii where the location of base stations is specified in, for instance, whole meters and not with arbitrary precision.

Since we are interested in the complexity and the convergence rate of our algorithm as a function of the network size $N$ and of the library size $J$ we need to consider a sequence (indexed by $J$) of file popularity distributions. In general it is possible that $a_{i}\to 0$ as $J\to\infty$. We will assume that for all $i$, $a_i$ decreases at most polynomially fast in $J$ and say that such a sequence of distributions {\em scales polynomially}. This condition is satisfied for all practical scenarii, like Zipf distributions. In fact, if the Zipf scaling parameter $\gamma>1$, then $a_i$ converges to a positive value for all $i$. 
In this subsection we assume that the caches are scheduled in round-robin fashion.

In general, there are examples of potential games where converging to a Nash equilibrium by the best response dynamics can
take exponential time, see \eg \cite{Aetal08}. In the remainder of this subsection we will show that under discrete placement of caches and polynomial scaling of the popularity distribution, the convergence time is at most polynomial in $N$ and $J$. Our proof relies on the following result, the proof of which is given in Appendix~\ref{app:improvement}.
\begin{thm} \label{thm:improvement}
Let $\mathbf{B}$ and $\mathbf{\tilde B}$ denote the placement before and after one local update, respectively. Consider a discrete placement of caches and polynomial scaling of file popularities. Then
\begin{equation}
f(\mathbf{B}) \neq f(\mathbf{\tilde B}) \Longrightarrow f(\mathbf{B}) - f(\mathbf{\tilde B})\geq \kappa_1 N^{-1} J^{-\kappa_2},
\end{equation}
where $\kappa_1>0$ and $\kappa_2\geq 0$ are constants.
\end{thm}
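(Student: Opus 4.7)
My plan is to start by rewriting the change of the potential after the update at cache $m$. Since only cache $m$'s vector differs between $\mathbf{B}$ and $\tilde{\mathbf{B}}$, the same computation used in the proof of Theorem~\ref{thm:potentialgame} yields
\[
f(\mathbf{B}) - f(\tilde{\mathbf{B}}) \;=\; \sum_{j=1}^{J} a_j\, q_m(j)\bigl(\tilde b_j^{(m)} - b_j^{(m)}\bigr).
\]
Let $I_{+} = \{j : b_j^{(m)} = 0,\ \tilde b_j^{(m)} = 1\}$ and $I_{-} = \{j : b_j^{(m)} = 1,\ \tilde b_j^{(m)} = 0\}$; the capacity constraint forces $|I_{+}| = |I_{-}|$, so choosing any bijection $\phi : I_{-} \to I_{+}$ gives
\[
f(\mathbf{B}) - f(\tilde{\mathbf{B}}) \;=\; \sum_{j \in I_{-}} \bigl(a_{\phi(j)}\,q_m(\phi(j)) - a_j\,q_m(j)\bigr).
\]
By Theorem~\ref{PPPopt}, $\tilde{\mathbf{b}}^{(m)}$ stores the $K$ files with largest $a_j q_m(j)$, so every summand is non-negative, and the hypothesis $f(\mathbf{B}) \neq f(\tilde{\mathbf{B}})$ guarantees that at least one summand is strictly positive. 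Hence it suffices to lower bound, by $\kappa_1 N^{-1} J^{-\kappa_2}$, the minimum \emph{strictly positive} value of a single-swap gap $a_{j'}q_m(j') - a_j q_m(j)$ with $j,j' \in \{1,\dots,J\}$.

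To bound this gap I would factor $q_m(j) = Q_m(j)/|A_{cov}|$, where $Q_m(j) = \sum_{s \ni m}|A_s|\prod_{\ell \in s\setminus\{m\}}(1 - b_j^{(\ell)})$, and use the crude bound $|A_{cov}| \leq N\pi r^2$ to extract the $N^{-1}$ factor. The discrete-placement assumption forces the number of caches whose disc meets $\bar{A}_m$ to be bounded by a constant $M(d,r)$, so $Q_m(j)$ is a sum of at most $2^{M(d,r)}$ terms drawn from the finite collection of realizable region-areas of such a local configuration. Consequently $Q_m(j)$ takes values in a finite set $\mathcal{Q}(d,r)$, whose smallest positive element and smallest positive pairwise gap are both fixed constants depending only on $d$ and $r$. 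What remains is to prove
\[
\min\bigl\{|a_{j'}Q_m(j') - a_j Q_m(j)| : a_{j'} Q_m(j') \neq a_j Q_m(j)\bigr\} \;\geq\; \kappa' J^{-\kappa_2}.
\]

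I would split this into two cases. If $Q_m(j) = Q_m(j')$, the gap equals $|a_j - a_{j'}|\,Q_m(j)$, which is bounded below by the minimum strictly positive popularity gap times the smallest positive element of $\mathcal{Q}(d,r)$; for Zipf popularities neighbouring values differ by an amount of order $J^{-\gamma-1}$, and more generally the polynomial-scaling hypothesis gives a lower bound of the form $cJ^{-\kappa_2}$. The delicate case is $Q_m(j) \neq Q_m(j')$ with both products nonzero, where two distinct products $a_j Q_m(j)$ and $a_{j'} Q_m(j')$ must still be separated by at least $cJ^{-\kappa_2}$. This is the main obstacle, since a priori the popularity factors and geometric factors could conspire to make the difference arbitrarily small. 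I would handle it by enumerating the at most $J\cdot|\mathcal{Q}(d,r)|$ possible values of $a_j Q_m(j)$ and exploiting the explicit algebraic form of the polynomial-scaling popularities together with the finiteness (and fixedness in $d,r$) of $\mathcal{Q}(d,r)$ to show that any two distinct values are separated by a polynomial in $1/J$. Assembling both cases and dividing by $|A_{cov}|$ then yields the stated inequality $f(\mathbf{B}) - f(\tilde{\mathbf{B}}) \geq \kappa_1 N^{-1} J^{-\kappa_2}$.
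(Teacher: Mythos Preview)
Your approach mirrors the paper's proof closely: the paper also reduces to single-file swaps (its first lemma), expresses the improvement as $|a_i q_m(i) - a_j q_m(j)|$ (its Lemma~\ref{lem:diff}), and then invokes discrete placement to argue that the $p_s$ gaps scale as $\kappa_3(d,r)N^{-1}$ together with the polynomial lower bound $a_i \geq \kappa_4 J^{-\kappa_2}$ (its Lemma~\ref{lem:improvebound}). Your factoring $q_m(j) = Q_m(j)/|A_{cov}|$ with $|A_{cov}| \leq N\pi r^2$ is exactly the mechanism behind the paper's $N^{-1}$ scaling of the $p_s$ gaps, just written out more explicitly.

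The one substantive difference is that you single out the ``delicate case'' $Q_m(j) \neq Q_m(j')$ and note that popularity and geometric factors could in principle cancel; the paper's proof does not isolate this case and simply asserts that the bound follows from the two ingredients above with $\kappa_1 = \kappa_3(d,r)\kappa_4$. Your proposed resolution---enumeration over the finite set $\mathcal{Q}(d,r)$ combined with the explicit algebraic form of the polynomially scaling popularities---is not fully fleshed out either, but it is already more careful than the paper's own treatment of this step. In short, you have recovered the paper's argument and, if anything, been more honest about where the remaining work lies.
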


The main result of this subsection is as follows.
\begin{thm}
Consider discrete placement of caches, polynomial scaling of file popularities and round-robin scheduling of caches. 
Then, the best response dynamics of the content placement game converges to a Nash equilibrium in at most $\kappa_1^{-1}N^2J^{\kappa_2}$ iterations, with $\kappa_1$ and $\kappa_2$ as in Theorem~\ref{thm:improvement}.
\end{thm}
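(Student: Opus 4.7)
My plan is to combine the potential game structure established in Theorem~\ref{thm:potentialgame} with the quantitative improvement bound of Theorem~\ref{thm:improvement} and a simple counting argument tailored to the round-robin schedule.

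First I would observe that $f(\mathbf{B})$ is a probability, so $0 \le f(\mathbf{B}) \le 1$ for every placement $\mathbf{B}$. Since the game is potential with potential $f$, every best-response move that is not a tie strictly decreases $f$, and every tie (a move that leaves $f^{(m)}$, and hence $f$, unchanged) can only occur when cache $m$ had no strictly improving deviation. By Theorem~\ref{thm:improvement}, whenever a single best-response step does change $f$, the decrease is at least $\kappa_1 N^{-1} J^{-\kappa_2}$. Combining boundedness $0\le f\le 1$ with the per-step decrease, the number of strictly improving updates across the entire run is bounded above by
\[
\frac{1-0}{\kappa_1 N^{-1} J^{-\kappa_2}} \;=\; \kappa_1^{-1}\, N\, J^{\kappa_2}.
\]

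Second, I would use the round-robin schedule to turn this bound on productive updates into a bound on total iterations. Under round-robin, every cache is activated exactly once per block of $N$ consecutive iterations. If during some complete block of $N$ consecutive iterations no cache modifies $f$, then no cache has a strictly improving deviation at the end of that block, which is precisely the Nash equilibrium condition established in Section~\ref{subsec:convergence}. Hence between any two consecutive strictly-improving updates at most $N-1$ non-improving updates can occur, so the total number of iterations before termination is at most $N$ times the number of strictly improving updates, giving the claimed bound $\kappa_1^{-1} N^2 J^{\kappa_2}$.

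The only nontrivial ingredient is Theorem~\ref{thm:improvement} itself, which is where the assumptions of discrete placement and polynomial scaling of popularities are exploited to rule out arbitrarily small nonzero changes in the potential; the present theorem is essentially a repackaging of that bound together with the potential boundedness $f\in[0,1]$ and the round-robin assumption. The main point to be careful about is confirming that under round-robin we can indeed certify equilibrium after one full quiescent pass over all $N$ caches (so that the factor $N$, and not some larger horizon, is correct), and that using $f\le 1$ as the initial gap is legitimate regardless of the initial placement.
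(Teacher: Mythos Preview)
Your proposal is correct and follows essentially the same approach as the paper: bound the number of strictly improving steps by $\kappa_1^{-1}NJ^{\kappa_2}$ using $f\in[0,1]$ together with Theorem~\ref{thm:improvement}, and then use the round-robin schedule to argue that a full pass of non-improving updates certifies a Nash equilibrium, so at most $N-1$ non-improving steps separate consecutive improving ones. The paper's proof is the same argument in slightly terser form (it states at most $N-2$ consecutive non-improving updates, a minor counting difference that does not affect the $N^2$ factor in the final bound).
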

\begin{proof}
If we improve the miss probability by making a local update we improve it by at least $\kappa_1 N^{-1} J^{-\kappa_2}$ by Theorem~\ref{thm:improvement}. We can make at most $\kappa_1^{-1}NJ^{\kappa_2}$ such improvements, because we are minimizing the miss probability, which is bounded between $0$ and $1$. Furthermore, we cannot have more than $N-2$ sequential updates in which we are not improving, because we are using a round-robin schedule and not being able to provide a strictly better response for any of the caches implies that we have reached a Nash equilibrium.
\end{proof}

Thus, the complexity of our basic algorithm in the context of discrete placement is polynomial in time.
We note that this is quite interesting result as in general the best response dynamics in potential games
does not have polynomial time complexity.

Next, we demonstrate that the computational complexity of each update does not increase with the network size or the catalog size.
\begin{thm}
Consider discrete placement of caches. Then the complexity of each update is constant in both the network size $N$ and the catalog size $J$.
\end{thm}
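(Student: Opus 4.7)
The plan is to show that, although the network has $N$ caches and the catalog has $J$ files, the best response update at a fixed cache $m$ only needs to examine a constant-size set of files and interact with a constant-size set of neighbors.

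First I would exploit the discrete placement to bound the local neighborhood of $m$. Since cache locations lie on the grid $d\mathbb{Z}^2$ and each coverage region is a disk of radius $r$, two caches can overlap only if their centers lie within distance $2r$. Hence the number of caches whose coverage intersects that of $m$ is at most the number of grid points in a ball of radius $2r$, which is a constant $M$ depending only on $r/d$, and not on $N$ or $J$. Consequently, the subsets $s \in \Theta$ with $m \in s$ and $p_s>0$ are all contained in $\{m\} \cup \mathrm{neighbors}(m)$, so there are at most $2^M$ of them. Thus, for any single file $j$, the quantity $q_m(j)$ in~(\ref{eq:qdef}) can be evaluated in $O(M 2^M)$ arithmetic operations, a constant independent of $N$ and $J$.

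Second, I would partition the catalog according to whether a file is stored by any neighbor of $m$. Let $S$ be the union of the contents of the neighboring caches; then $|S| \leq MK$, which is constant in $N$ and $J$. For every file $j \notin S$ all factors $1 - b_j^{(\ell)}$ in~(\ref{eq:qdef}) equal $1$, so $q_m(j) = P_m := \sum_{s \ni m} p_s$, independent of $j$. Therefore, restricted to files outside $S$, the ranking induced by $a_j q_m(j)$ coincides with the popularity ranking. By Theorem~\ref{PPPopt}, the updated cache of $m$ consists of the $K$ files with largest $a_j q_m(j)$, and these must lie in $S \cup T$, where $T$ is the set of the $K$ most popular files not in $S$. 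The candidate set $S \cup T$ has size at most $(M+1)K$, a constant.

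Finally, I would bound the cost of executing the selection on this constant-size candidate set. Assuming files are indexed in popularity order once and for all, $T$ can be constructed in $O(K + |S|) = O(MK)$ time by scanning the popularity list from the top and skipping files in $S$; $S$ itself is assembled by unioning the $O(M)$ neighbor placements; each value $a_j q_m(j)$ for $j \in S \cup T$ is computed in $O(M 2^M)$ time; and sorting $(M+1)K$ values and returning the top $K$ takes $O(K \log K)$. All of these bounds depend only on $K$, $M$ (hence on $r/d$), so the total cost is constant in both $N$ and $J$. The only conceptually non-trivial step — and where I would focus the writeup — is the observation that files unstored by any neighbor are interchangeable with respect to $q_m$, so only a constant-size window of the popularity list needs to be examined rather than the whole catalog.
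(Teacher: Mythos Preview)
Your proposal is correct and follows essentially the same approach as the paper: bound the neighborhood size by a constant depending only on $r/d$, then argue that only $(M+1)K$ files need to be examined so the update cost is independent of $N$ and $J$. Your argument is in fact more complete than the paper's, which simply asserts that the relevant files lie among the $(M+1)K$ most popular without justification; you supply that missing step by observing that every file not stored by a neighbor has the same $q_m$ value, so outside $S$ the ranking reduces to popularity and the best response must lie in $S\cup T$.
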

\begin{proof}
In a discrete placement each cache will have at most $(\lceil 2r/d\rceil + 1)^2$ neighbours (including itself). Together, these caches can store at most $K(\lceil 2r/d\rceil + 1)^2$ files. In order to minimize the miss probability, the files that need to be cached will be a subset of the $K(\lceil 2r/d\rceil + 1)^2$ most popular files and there is no need to consider other files in the content library. Therefore, the complexity is independent of the library size. Also, the number of neigbours is independent of $N$.
\end{proof}

We have presented our complexity result for a discrete placement of caches, but the result can easily be generalized to any placement of caches in which the number of neighbours of caches is bounded. Furthermore, it is indicated in the proof that in each update we only need to consider the $(M+1)K$ most popular files, where $M$ is the number of neighbours of the caches that is being updated. This result can be strengthened as follows. Once we perform an update for cache $m$, we know the placement strategies of all its neighbours. It is easy check for the least popular file stored among all neighbours. We denote the index of this least popular file by $S_m$. Then, one can see that we need to search over first $S_m + K$ files  only, where we note that $S_m + K \leq K \times (M+1)$.

Finally, note that if we relax the assumption of discrete placement and consider arbitrary (continuous) placement of caches, our game will still be in the PLS complexity class~\cite{yannakakis}.

\section{Simulated Annealing Approach to Global Optimality}\label{sec:globalopt}
In this section we will first give an example of a network where the best response dynamics converges to a local optimum. Next we will present two potential remedies; stochastic simulated annealing and deterministic simulated annealing, in order to achieve global optimum for such networks.
\subsection{Best response and local optima} \label{ssec:localoptima}
The main aim of this section is to show that the content placement game might get stuck at a local optimum in a symmetric network topology. Caches are located on a $4 \times 4$ grid. Caches have three-dimensional coverage area, resulting each cache being located at the center of a torus. An example is shown in Figure~\ref{fig:nonoptstrgrid}, each node representing the location of a cache with a toroidal coverage area. Caches are sharing a common coverage area with their neighbours and caches located at the edge of the grid network are neighbouring with the ones located at the opposite edge of the network (depending on the coverage radius.).

Consider the case of caches with $K$-slot cache memory and the content library of size $J = 1000$. We set the cache capacity of the caches as $K = 3$. We assume a Zipf distribution for the file popularities, setting $\gamma = 1$ and taking $a_j$ according to \eqref{zipfpars}. The coverage radius is set to $r = 700$ $m$. Distance between caches is set to $d = r\sqrt{2}$ $m$.

\begin{figure}
\centering
%
  \begin{tikzpicture}[scale=0.18]
 \coordinate (Origin)   at (0,0);
    \coordinate (XAxisMin) at (-15,0);
    \coordinate (XAxisMax) at (15,0);
    \coordinate (YAxisMin) at (0,-15);
    \coordinate (YAxisMax) at (0,15);

    \coordinate (Cone) at (-15,-15);
    \coordinate (Ctwo) at (-15,-5);
    \coordinate (Cthree) at (-15,5);
    \coordinate (Cfour) at (-15,15);
    \coordinate (Cfive) at (-5,-15);
    \coordinate (Csix) at (-5,-5);
    \coordinate (Cseven) at (-5,5);
    \coordinate (Ceight) at (-5,15);
    \coordinate (Cnine) at (5,-15);
    \coordinate (Cten) at (5,-5);
    \coordinate (Celeven) at (5,5);
    \coordinate (Ctwelve) at (5,15);
    \coordinate (Cthirteen) at (15,-15);
    \coordinate (Cfourteen) at (15,-5);
    \coordinate (Cfifteen) at (15,5);
    \coordinate (Csixteen) at (15,15);
	
\node[draw,circle,inner sep=2pt,fill] at (Cone) {};
\node[draw,circle,inner sep=2pt,fill=none] at (Ctwo) {};
\node[draw,circle,inner sep=2pt,fill] at (Cthree) {};
\node[draw,circle,inner sep=2pt,fill=none] at (Cfour) {};
\node[draw,circle,inner sep=2pt,fill=none] at (Cfive) {};
\node[draw,circle,inner sep=2pt,fill] at (Csix) {};
\node[draw=red,diamond,inner sep=2pt,fill=none] at (Cseven) {};
\node[draw,circle,inner sep=2pt,fill] at (Ceight) {};
\node[draw,circle,inner sep=2pt,fill] at (Cnine) {};
\node[draw=red,diamond,inner sep=2pt,fill=none] at (Cten) {};
\node[draw=red,diamond,inner sep=2pt,fill=red] at (Celeven) {};
\node[draw=red,diamond,inner sep=2pt,fill=none] at (Ctwelve) {};
\node[draw,circle,inner sep=2pt,fill=none] at (Cthirteen) {};
\node[draw,circle,inner sep=2pt,fill] at (Cfourteen) {};
\node[draw=red,diamond,inner sep=2pt,fill=none] at (Cfifteen) {};
\node[draw,circle,inner sep=2pt,fill] at (Csixteen) {};

\draw(Cone)node[label=below:{$\left[1,1,0,1,0\right]$}]{};
\draw(Ctwo)node[label=below:{$\left[1,0,1,0,1\right]$}]{};
\draw(Cthree)node[label=below:{$\left[1,1,0,1,0\right]$}]{};
\draw(Cfour)node[label=below:{$\left[1,0,1,0,1\right]$}]{};
\draw(Cfive)node[label=below:{$\left[1,0,1,0,1\right]$}]{};;
\draw(Csix)node[label=below:{$\left[1,1,0,1,0\right]$}]{};
\draw(Cseven)node[label=below:{\color{red}$\left[1,1,1,0,0\right]$}]{};
\draw(Ceight)node[label=below:{$\left[1,1,0,1,0\right]$}]{};
\draw(Cnine)node[label=below:{$\left[1,1,0,1,0\right]$}]{};
\draw(Cten)node[label=below:{\color{red}$\left[1,1,1,0,0\right]$}]{};
\draw(Celeven)node[label=below:{\color{red}$\left[1,0,0,1,1\right]$}]{};
\draw(Ctwelve)node[label=below:{\color{red}$\left[1,1,1,0,0\right]$}]{};
\draw(Cthirteen)node[label=below:{$\left[1,0,1,0,1\right]$}]{};
\draw(Cfourteen)node[label=below:{$\left[1,1,0,1,0\right]$}]{};
\draw(Cfifteen)node[label=below:{\color{red}$\left[1,1,1,0,0\right]$}]{};
\draw(Csixteen)node[label=below:{$\left[1,1,0,1,0\right]$}]{};

\draw[blue,ultra thin,dashed] (Cone) circle (7.07cm);
\draw[blue,ultra thin,dashed] (Ctwo) circle (7.07cm);
\draw[blue,ultra thin,dashed] (Cthree) circle (7.07cm);
\draw[blue,ultra thin,dashed] (Cfour) circle (7.07cm);
\draw[blue,ultra thin,dashed] (Cfive) circle (7.07cm);
\draw[blue,ultra thin,dashed] (Csix) circle (7.07cm);
\draw[blue,ultra thin,dashed] (Cseven) circle (7.07cm);
\draw[blue,ultra thin,dashed] (Ceight) circle (7.07cm);
\draw[blue,ultra thin,dashed] (Cnine) circle (7.07cm);
\draw[blue,ultra thin,dashed] (Cten) circle (7.07cm);
\draw[blue,ultra thin,dashed] (Celeven) circle (7.07cm);
\draw[blue,ultra thin,dashed] (Ctwelve) circle (7.07cm);
\draw[blue,ultra thin,dashed] (Cthirteen) circle (7.07cm);
\draw[blue,ultra thin,dashed] (Cfourteen) circle (7.07cm);
\draw[blue,ultra thin,dashed] (Cfifteen) circle (7.07cm);
\draw[blue,ultra thin,dashed] (Csixteen) circle (7.07cm);
  \end{tikzpicture}
\caption{A final non-optimal file placement strategy example for a $4 \times 4$ grid network.}
\label{fig:nonoptstrgrid}
\end{figure}

In Figure~\ref{fig:nonoptstrgrid}, we have depicted a file placement strategy (only for the first $5$ files, the rest are all-zero) that is a Nash equilibrium of the potential content placement game. We will argue below that this Nash equilibrium has a hit probability that is a slightly lower value than the global optimum. Each node is representing a cache. With this strategy, you can verify that KKT conditions are satisfied. It is clear that the red diamond shaped caches are storing different set of files compared to black circle shaped caches. One can also verify that KKT conditions will be satisfied if red diamonds follow the same strategy as in black circles (filled diamond follows the filled circle strategy and empty ones follow the empty circle strategy, respectively.). In this case, the hit probability will give the global optimum. Just to give some intuition, consider the following example: $c_2$ is available in both empty diamonds and filled circles and $c_5$ is not present in any of them. Hence, the intersecting area between empty diamond and filled circle will have a performance penalty, which reduces the total hit probability. This would not occur if the diamonds were circles.

\subsection{Stochastic simulated annealing}\label{subsec:SSA}
In this subsection, following the framework of Hajek~\cite{Hajek}, we provide a simulated annealing (SA) algorithm that will converge to the global optimum with probability $1$. Intuitively, the idea of the algorithm is to allow, with a small probability, for arbitrary changes to the placement policy at a cache during a local update.

More formally, we will construct a discrete-time Markov chain $\mathbf{B}_0, \mathbf{B}_1, \dots$, that converges to the optimal placement w.p.\ $1$. To that end, for a state $\mathbf{B}$ we define a neighbourhood $\mathcal{N}(\mathbf{B})$ as those placement policies that differ from $\mathbf{B}$ in at most one column, i.e.\ that differ in at most one cache. Within a column, any change that satisfies the capacity constraint $\sum_{j=1}^J b_j^{(m)}=K$ is allowed.

Given $\mathbf{B}_{t}=\mathbf{B}$, a potential next state $\mathbf{Y}_t$ is chosen from $\mathcal{N}(\mathbf{B})$ with probability distribution $P[\mathbf{Y}_t=\tilde{\mathbf{B}}|\mathbf{B}_{t}=\mathbf{B}]= R(\mathbf{B}, \tilde{\mathbf{B}}) $, defined as
 \begin{equation}
 R(\mathbf{B}, \tilde{\mathbf{B}}) = \frac{1}{N}\sum_{m=1}^N R_m(\mathbf{B}, \tilde{\mathbf{B}}),
 \end{equation}
with
 \begin{equation}\label{eq:transmat}
 R_m(\mathbf{B}, \tilde{\mathbf{B}}) =
 \left\{
 \begin{array}{rl}
\tilde{p}, &\text{if } \tilde{b}^{(m)} = \bar{b}^{(m)}, \text{ } \tilde{b}^{(-m)}=b^{(-m)},\\
 \frac{1 - \tilde{p}}{\binom{J}{K}-1}, &\text{if } \tilde{b}^{(m)} \neq \bar{b}^{(m)}, \text{ } \sum_{j=1}^J \tilde b_j^{(m)}=K, \text{ } \tilde{b}^{(-m)}=b^{(-m)},\\
 0, &\text{otherwise}.
 \end{array}
 \right.
 \end{equation}
where $0<\tilde{p}<1$ is a constant and where $ \bar{b}^{(m)}$ is a solution to Problem~\ref{modprb}, i.e.\  $\bar{b}^{(m)}$ is the best response for cache $m$ in the content placement game.
Then, we set
\begin{align}
\mathbf{B}_{t+1} = \left\{
\begin{array}{rl}
\mathbf{Y}_t, & \text{with probability } \hat{p}_t,\\
\mathbf{B}_t, & \text{otherwise},\\
\end{array} \right.
\end{align}
where
\begin{equation}
\label{pSA}
\hat{p}_t = \exp\left[\frac{-\max\{f(\mathbf{Y}_t) - f(\mathbf{B}),0\}}{T_t}\right],
\end{equation}
with
\begin{equation}
\label{Tt}
T_t= \frac{d}{\log{(t+1)}},
\end{equation}
where $d>0$ is a constant.

\begin{thm}\label{thm:depth}
If $d\geq 1$, then Markov chain $\{B_t\}$ converges with probability $1$ to a global optimum of Problem~\ref{orgprb}.
\end{thm}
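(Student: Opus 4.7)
The plan is to invoke the convergence theorem for simulated annealing due to Hajek~\cite{Hajek}, of which $\{\mathbf{B}_t\}$ is an instance. The state space is the finite collection of feasible placements satisfying $\sum_j b_j^{(m)}=K$ for every $m$, the energy is $f$, the generation kernel is $R$ defined by~(\ref{eq:transmat}), and the temperature schedule is~(\ref{Tt}). To apply Hajek's theorem I need to verify two structural hypotheses on $R$ (irreducibility and weak reversibility) and then bound the maximum depth $d^*$ of non-global local minima, showing $d^*\le 1$.

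For the structural hypotheses, I would first observe from~(\ref{eq:transmat}) that $R(\mathbf{B},\tilde{\mathbf{B}})>0$ holds exactly when $\tilde{\mathbf{B}}$ either coincides with $\mathbf{B}$ or differs from it in a single column while respecting the capacity constraint. Hence the support of $R$ is symmetric, $R(\mathbf{B},\tilde{\mathbf{B}})>0\iff R(\tilde{\mathbf{B}},\mathbf{B})>0$, which immediately yields Hajek's weak-reversibility condition: for any height $\lambda$, $\tilde{\mathbf{B}}$ is reachable from $\mathbf{B}$ through states of cost at most $\lambda$ iff $\mathbf{B}$ is reachable from $\tilde{\mathbf{B}}$ at the same height, since any admissible path can simply be traversed in reverse. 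Irreducibility follows because any two placements are connected by a sequence of at most $N$ single-column changes, each carrying $R$-probability bounded below by the positive constant $(1-\tilde{p})/[N(\binom{J}{K}-1)]$.

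Next, I would bound the depth. For a non-global local minimum $\mathbf{B}$, its depth $d(\mathbf{B})$ is the smallest number $E$ such that a path from $\mathbf{B}$ to some $\mathbf{B}'$ with $f(\mathbf{B}')<f(\mathbf{B})$ exists along which $f(\mathbf{B}_i)\le f(\mathbf{B})+E$ for every state on the path. Because $f$ takes values in $[0,1]$ and because irreducibility guarantees the existence of a path from $\mathbf{B}$ to any global optimum, we obtain $d(\mathbf{B})\le 1-f(\mathbf{B})\le 1$ for every non-global local minimum. Hence $d^*\le 1$, and Hajek's theorem with cooling schedule $T_t=d/\log(t+1)$ and $d\ge d^*$ gives that $\{\mathbf{B}_t\}$ concentrates on the set of global minima of $f$, so the hypothesis $d\ge 1$ of the theorem is sufficient.

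The main obstacle, and the one to be careful about, is that Hajek's theorem must be invoked with its \emph{weak-reversibility} hypothesis rather than the stronger detailed-balance condition $R(\mathbf{B},\tilde{\mathbf{B}})=R(\tilde{\mathbf{B}},\mathbf{B})$; the latter \emph{fails} here because $R$ is biased toward the best response $\bar b^{(m)}$ (with probability $\tilde p$) rather than being uniform on $\mathcal{N}(\mathbf{B})$. The resolution is to verify the symmetric-support property of $R$, which yields weak reversibility for free, after which the elementary depth bound coming from $f\in[0,1]$ and the logarithmic cooling schedule do the rest.
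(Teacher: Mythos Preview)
Your proposal is correct and follows essentially the same route as the paper: invoke Hajek's theorem after checking irreducibility, weak reversibility via the symmetric neighborhood/support structure, and the depth bound $d^*\le 1$ coming from $f\in[0,1]$. Your explicit remark that detailed balance for $R$ fails (because of the bias toward the best response) while symmetric support still yields weak reversibility is a useful clarification that the paper leaves implicit, but the underlying argument is the same.
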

\begin{proof}
This follows directly from~\cite[Theorem~1]{Hajek}, for which we demonstrate here that we satisfy all conditions. First, $\{B_t\}$ is irreducible. Second, we have the property that
\begin{equation} \label{eq:symN}
\tilde{\mathbf{B}}\in\mathcal{N}(\mathbf{B}) \Longleftrightarrow \mathbf{B}\in\mathcal{N}( \tilde{\mathbf{B}}),
\end{equation}
for all $\mathbf{B}$ and $\tilde{\mathbf{B}}$. It can be readily verified that~\eqref{eq:symN} is sufficient for weak reversibility as defined in~\cite{Hajek}.  Finally, since our objective function is a probability, it is bounded between $0$ and $1$. The depth of a local minimum is, therefore, at most $1$ and it is sufficient to consider $d\geq 1$.
\end{proof}
Note that the selection of $\tilde{b}^{(m)} \neq \bar{b}^{(m)}$ in~\eqref{eq:transmat} uniformly at random requires the computation of $\binom{J}{K}$. We use Fisher-Yates shuffle~\cite{FisherYates} to obtain the random permutation over $J$ files and store the first $K$ many files after shuffling to produce random $\tilde{b}^{(m)}$ vectors. With Durstenfeld's extended Fisher-Yates shuffle algorithm~\cite{Durstenfeld}, computational complexity is $O(J)$.
\vspace{0.2cm}
\begin{figure}[!htb]
\removelatexerror
  \begin{algorithm}[H]
\label{saalg}
   \caption{Stochastic simulated annealing (SSA)}
   initialize $\mathbf{b}^{(m)} = [\underbrace{1, \dots, 1}_{\text{$K$ many}}, 0, \dots, 0]$, $\forall m \in \{1,\dots,N\}$\;
   \For {$t = 1, 2, \dots$}
   {
      m = Uniform(N)\;
      Set the temperature $T_t$ using \eqref{Tt}\;
      Pick a random number $\rho \in [0,1]$\;
      \eIf {$\rho < \tilde{p}$}
      {
      Solve Problem \ref{modprb} for cache $m$ and find $\mathbf{\bar{b}}^{(m)}$ using the information coming from neighbours\;
      Set $\mathbf{\tilde{b}}^{(m)} = \mathbf{\bar{b}}^{(m)}$
      }
      {
      Set $\mathbf{\tilde{b}}^{(m)} = \text{Uniformly at random}$\;
      }
Update $\tilde{\mathbf{B}}_k$ by inserting the designated row\;
Compute $\hat{p}_t$ using \eqref{pSA}\;
Pick a random number $\mu \in [0,1]$\;
\eIf {$\mu < \hat{p}_t$}
      {
      Set $\mathbf{B}_{t+1} = \tilde{\mathbf{B}}_t$
      }
      {
      Set $\mathbf{B}_{t+1} = \mathbf{B}_{t}$
      }
}
  \end{algorithm}
\end{figure}
\vspace{0.3cm}

\subsection{Deterministic simulated annealing}
In this subsection we will briefly present another new algorithm based on deterministic simulated annealing.
The basic idea of the deterministic simulated annealing or homotopy approach (see e.g., \cite{R98},\cite{SKC06})
is to gradually transform an easier problem to the original, more difficult, problem.
In deterministic simulated annealing (DSA), an initial $\tau$ is set. The problem formulation is similar to the procedure given in Section~\ref{subsec:local}. We have the same problem as in Problem~\ref{tildmodprb} with modified boundary constraints for the file placement probabilities, \ie $b_j^{(m)}$ can now take values from the closed set $[\tau, 1-\tau]$ instead of the closed set $[0,1]$. Following a similar analysis in the aforementioned section, we have the following theorem, which we state without proof.
\begin{thm}
\label{PPPopttau}
The optimal solution to DSA problem is given by
\begin{align}
\label{optsoleqtau}
\bar{b}^{(m)}_j  = \left\{
\begin{array}{rl}
1 - \tau, & \text{if } \pi_m^{-1}(j) < \lceil K - \tau J \rceil +1\\
\delta, & \text{if } \pi_m^{-1}(j) = \lceil K - \tau J \rceil +1\\
\tau, & \text{if } \pi_m^{-1}(j) > \lceil K - \tau J \rceil +1,\\
\end{array} \right.
\end{align}
where $\pi_m: [1, J] \rightarrow [1,J]$ satisfies $a_{\pi_m(1)}q_m\left(\pi_m(1)\right) \geq a_{\pi_m(2)}q_m\left(\pi_m(2)\right) \geq \dots \geq a_{\pi_m(J)}q_m\left(\pi_m(J)\right)$,
and $\delta = K - \left[\left(\lceil K - \tau J \rceil\right)\left(1-\tau\right) + \left(J -\lceil K - \tau J \rceil - 1\right)\tau \right]$.
\end{thm}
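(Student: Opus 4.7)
The plan is to mirror the KKT argument used in the proof of Theorem~\ref{PPPopt}, but with the box constraints $b_j^{(m)}\in[0,1]$ replaced by $b_j^{(m)}\in[\tau,1-\tau]$. Since $f^{(m)}$ is still linear in $\mathbf{b}^{(m)}$ (Lemma~\ref{convex} carries over verbatim), the problem remains convex and KKT conditions are both necessary and sufficient. I would write the Lagrangian with the multiplier $\nu$ for the capacity constraint, $\eta_j\ge 0$ for $b_j^{(m)}\ge\tau$, and $\omega_j\ge 0$ for $b_j^{(m)}\le 1-\tau$, yielding the stationarity relation $-a_j q_m(j)+\bar\nu-\bar\eta_j+\bar\omega_j=0$ in direct analogy with \eqref{kkt7}.

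Combining stationarity with complementary slackness $(\bar b_j^{(m)}-\tau)\bar\eta_j=0$ and $(1-\tau-\bar b_j^{(m)})\bar\omega_j=0$ partitions the files into three regimes exactly as in the proof of Theorem~\ref{PPPopt}: if $\bar\nu<a_j q_m(j)$ then $\bar\omega_j>0$ and hence $\bar b_j^{(m)}=1-\tau$; if $\bar\nu>a_j q_m(j)$ then $\bar\eta_j>0$ and hence $\bar b_j^{(m)}=\tau$; and only the tie case $\bar\nu=a_j q_m(j)$ permits an interior value. Since files are indexed in decreasing order of $a_{\pi_m(j)}q_m(\pi_m(j))$, the highest-ranked files saturate at $1-\tau$, the lowest at $\tau$, and at most one file (the one whose rank coincides with the threshold) can take an interior value.

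Next, I would use the capacity constraint $\sum_j \bar b_j^{(m)}=K$ to pin down the cutoff. If $n$ files are at $1-\tau$, one at $\delta$, and the remaining $J-n-1$ at $\tau$, the sum equals $n(1-\tau)+\delta+(J-n-1)\tau$; setting this equal to $K$ and solving for $n$ with the requirement $\delta\in[\tau,1-\tau]$ yields the unique integer $n=\lceil K-\tau J\rceil$, and the corresponding $\delta$ is precisely the formula in the theorem. The tie file is then $\pi_m(n+1)$, which is what the claimed piecewise expression asserts; setting $\bar\nu=a_{\pi_m(n+1)}q_m(\pi_m(n+1))$ makes all KKT conditions consistent simultaneously.

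The KKT derivation itself is routine; the main obstacle is the bookkeeping at the boundary file. Concretely, I need to check that with $n=\lceil K-\tau J\rceil$ one indeed has $\delta\in[\tau,1-\tau]$, using $K-\tau J\le n<K-\tau J+1$ together with the standing regime $\tau<1/2$ (which is the only interesting case, since the feasible set collapses otherwise). I would also verify the edge cases $K=\tau J$ (the tie file is unnecessary and $\delta=\tau$) and the situation where some of the top files tie in the sorted order, in which case any tie-breaking within the block of equal $a_j q_m(j)$ values yields an equally optimal solution and the stated formula still applies. This closes the argument without requiring any new ingredient beyond what was used in Theorem~\ref{PPPopt}.
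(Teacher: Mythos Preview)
Your proposal is correct and matches the approach the paper itself indicates: the paper states Theorem~\ref{PPPopttau} without proof, remarking only that it follows ``a similar analysis'' to that of Theorem~\ref{PPPopt}, and your KKT argument with the box constraints shifted from $[0,1]$ to $[\tau,1-\tau]$ is precisely that analysis. There is nothing to add beyond the bookkeeping you already flag for the boundary file and the feasibility range of $\tau$.
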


At the end of each iteration step, $\tau$ is reduced. The main idea behind the algorithm is to avoid entering a path that in the end KKT conditions are hold but the final hit probability gives a local minimum instead of the global one. The algorithm stops when $f^{(m)}(\mathbf{b}^{(m)}, \mathbf{b}^{(-m)})$ converges $\forall m \in \{1,\dots,N\}$, \ie a full round over all caches  $\{1,\dots,N\}$ does not give an improvement in hit probability. DSA applied to our problem of interest is shown in Algorithm~\ref{taudecreasing}.
\vspace{0.2cm}
\begin{figure}[!htb]
\removelatexerror
  \begin{algorithm}[H]
\label{taudecreasing}
   \caption{Deterministic simulated annealing (DSA)}
   initialize $\mathbf{b}^{(m)} = [0, \dots, 0]$, $\forall m \in \{1,\dots,N\}$\;
   set $imp(m) = 1$, $\forall m \in \{1,\dots,N\}$ \;
   set $\mathbf{imp} = [imp(1), \dots, imp(N)]$\;
   set the initial $\tau$\;
   \While {$\mathbf{imp} \neq \mathbf{0}$}
   {
      m = Uniform(N)\;
      Set $imp(m) = 0$\;
      Solve Problem \ref{modprb} for $b_j^{(m)} \in [\tau,1-\tau]$ for cache $m$ and find $\mathbf{\bar{b}}^{(m)}$ using the information coming from neighbours\;
      Compute $f^{(m)}(\mathbf{\bar{b}^{(m)}}, \mathbf{b}^{(-m)})$\;
      \If{$f^{(m)}(\mathbf{\bar{b}^{(m)}}, \mathbf{b}^{(-m)}) - f^{(m)}(\mathbf{b^{(m)}}, \mathbf{b}^{(-m)}) \neq 0$}
	{
		$imp(m) = 1$
	}
Decrease $\tau$.
}
  \end{algorithm}
\end{figure}

\begin{cor}\label{cor:DSA}
The optimal solution given in Theorem~\ref{PPPopttau} is a solution to Problem~\ref{modprb} when each $b_j^{(m)}$ is rounded to its nearest integer as $\tau \rightarrow 0$.
\end{cor}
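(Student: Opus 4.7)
The plan is to take the $\tau \to 0^+$ limit in the closed form of Theorem~\ref{PPPopttau} and verify that, after rounding each entry to the nearest integer, the resulting placement coincides exactly with the best response characterized in Theorem~\ref{PPPopt}; Theorem~\ref{thm:contrelax} then delivers the corollary directly.

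The key computation is to identify the threshold rank and bound the boundary value $\delta$ for small $\tau$. For any $\tau \in (0, 1/J)$ one has $K - 1 < K - \tau J < K$, so $\lceil K - \tau J \rceil = K$ and the threshold appearing in~\eqref{optsoleqtau} equals $K + 1$. Substituting into the three cases of Theorem~\ref{PPPopttau}, the top $K$ files under the ordering $\pi_m$ take the value $1 - \tau$, the $(K+1)$-st file takes the value $\delta$, and the remaining $J - K - 1$ files take the value $\tau$. Expanding the definition of $\delta$,
\begin{equation*}
\delta = K - \bigl[K(1 - \tau) + (J - K - 1)\tau\bigr] = (2K - J + 1)\tau,
\end{equation*}
which is $O(\tau)$ as $\tau \to 0^+$. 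Hence for $\tau$ sufficiently small we have $1 - \tau > 1/2$ while $\tau$ and $|\delta|$ are both strictly less than $1/2$, so nearest-integer rounding sends the top $K$ entries to $1$ and every other entry---including the boundary file---to $0$.

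To finish, I would observe that Theorems~\ref{PPPopt} and~\ref{PPPopttau} employ the \emph{same} ordering $\pi_m$ of files, because $q_m(j)$ is determined entirely by the fixed neighbour placement $\mathbf{b}^{(-m)}$. The rounded DSA configuration therefore coincides with the placement prescribed by Theorem~\ref{PPPopt}, which by Theorem~\ref{thm:contrelax} solves Problem~\ref{modprb}, as claimed. The delicate point is the treatment of $\delta$: a priori it need not even lie in the relaxed feasible interval $[\tau, 1 - \tau]$ (e.g.\ $\delta < 0$ when $J > 2K + 1$), so the argument must not rely on $\delta$'s sign, only on the asymptotic $|\delta| \to 0$ that makes the rounding step unambiguous in the limit.
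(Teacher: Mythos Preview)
Your argument is correct and follows the same route as the paper: show that as $\tau\to 0$ the solution of Theorem~\ref{PPPopttau} collapses to that of Theorem~\ref{PPPopt}, and then invoke Theorem~\ref{thm:contrelax}. The paper's proof merely asserts this equivalence in one line, whereas you actually carry out the threshold computation $\lceil K-\tau J\rceil=K$ and the expansion $\delta=(2K-J+1)\tau$, so your version is a strictly more detailed execution of the same idea rather than a different approach.
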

\begin{proof}
We already showed in Theorem~\ref{thm:contrelax} that the solution of Problem~\ref{tildmodprb} is a solution to Problem~\ref{modprb}. With the boundary conditions presented in DSA, we provide allowance on $b_j^{(m)}$ values to take values between the interval $[\tau,1-\tau]$ instead of taking values from the set $[0,1]$. Having solved the problem, the optimal solution given in Theorem~\ref{PPPopttau} is an equivalent to the one given in Theorem~\ref{PPPopt} as $\tau \rightarrow 0$. Hence, when $\tau$ is small enough, rounding $b_j^{(m)}$s to their nearest integers gives the solution to Problem~\ref{modprb}.
\end{proof}

We will see from numerical results in Section~\ref{sec:performance} that DSA is an interesting approach to escape from the local optimum for the grid network. Considering the same cache update sequence causing the diamond shaped strategies illustrated in Figure~\ref{fig:nonoptstrgrid}, optimizing the caches in the same sequence with DSA prevents the system ending up with diamond shaped strategies at the caches and the hit probability converges to the global optimum.
\section{Numerical Evaluation}\label{sec:performance}
In this section we will present practical implementations of our algorithms for the content placement game and evaluate our theoretical results according to a network of caches with their geographical locations following a homogeneous Poisson process, a real wireless network and a grid network. Finally we will present an example illustrating the resulting placement strategies of the caches obtained by our algorithms.

\subsection{The ROBR and RRBR algorithms}
The basic idea of our algorithm (which comes in two variants, RRBR and ROBR) is to repeatedly perform best response dynamics presented in Section~\ref{subsec:local}. We introduce some notation in the next definition.

Applying distributed optimization to cache $m$ gives the new placement policy denoted by ${\bar{\mathbf{b}}}^{(m)}$ which is given by Theorem~\ref{PPPopt}. Hence,
\begin{align}
\label{optsoleq}
{\bar{b}}^{(m)}_j = \left\{
\begin{array}{rl}
1, & \text{if } \pi^{-1}_m(j) \leq K,\\
0, & \text{if } \pi^{-1}_m(j) > K,\\
\end{array} \right.
\end{align}
where $\pi_m: [1, J] \rightarrow [1,J]$ satisfying $a_{\pi_m(1)}q_m\left(\pi_m(1)\right) \geq a_{\pi_m(2)}q_m\left(\pi_m(2)\right) \geq \dots \geq a_{\pi_m(J)}q_m\left(\pi_m(J)\right)$.

As neighbouring caches share information with each other, the idea is to see if applying distributed optimization iteratively and updating the file placement strategies over all caches gives $\mathbf{\bar{b}}^{(m)}$ for all $m \in [1:N]$ yielding to the global optimum given in Theorem~\ref{PPPoptorg}. To check this, we define the following algorithms.

For Round-Robin Best Response (RRBR) algorithm, we update the caches following the sequence of the indices of the caches. We assume that all caches are initially storing the most popular $K$ files. The algorithm stops when $f^{(m)}(\mathbf{b}^{(m)}, \mathbf{b}^{(-m)})$ converges $\forall m \in \{1,\dots,N\}$, \ie a full round over all caches  $\{1,\dots,N\}$ does not give an improvement in hit probability. RRBR algorithm is shown in Algorithm~\ref{rriterativeupd}.
\begin{figure}[!htb]
\removelatexerror
  \begin{algorithm}[H]
\label{rriterativeupd}
   \caption{Round-Robin Best Response (RRBR)}
   initialize $\mathbf{b}^{(m)} = [\underbrace{1, \dots, 1}_{\text{$K$ many}}, 0, \dots, 0]$, $\forall m \in \{1,\dots,N\}$\;
   set imp = 1\;
   \While {imp = 1}
   {
Set $imp = 0$\;
\For{$m = 1:N$}
{
      Solve Problem~\ref{modprb} for cache $m$ and find $\mathbf{\bar{b}}^{(m)}$ using the information coming from neighbours\;
      Compute $f^{(m)}(\mathbf{\bar{b}^{(m)}}, \mathbf{b}^{(-m)})$\;
      \If{$f^{(m)}(\mathbf{\bar{b}^{(m)}}, \mathbf{b}^{(-m)}) - f^{(m)}(\mathbf{b^{(m)}}, \mathbf{b}^{(-m)}) \neq 0$}
	{
		$imp = 1$
	}
}
}
  \end{algorithm}
\end{figure}

It is also possible to update the caches by following a random selection algorithm. For Random Order Best Response (ROBR) Algorithm, at every iteration step, a random cache is chosen uniformly from the total cache set $\{1,\dots,N\}$ and updated. We assume that all caches are initially storing the most popular $K$ files. The algorithm stops when $f^{(m)}(\mathbf{b}^{(m)}, \mathbf{b}^{(-m)})$ converges $\forall m \in \{1,\dots,N\}$, \ie a full round over all caches  $\{1,\dots,N\}$ does not give an improvement in hit probability. ROBR algorithm is shown in Algorithm~\ref{riterativeupd}.
\begin{figure}[!htb]
\removelatexerror
  \begin{algorithm}[H]
\label{riterativeupd}
   \caption{Random Order Best Response (ROBR)}
   initialize $\mathbf{b}^{(m)} = [\underbrace{1, \dots, 1}_{\text{$K$ many}}, 0, \dots, 0]$, $\forall m \in \{1,\dots,N\}$\;
   set $imp(m) = 1$, $\forall m \in \{1,\dots,N\}$ \;
   set $\mathbf{imp} = [imp(1), \dots, imp(N)]$\;
   \While {$\mathbf{imp} \neq \mathbf{0}$}
   {
      m = Uniform(N)\;
      Set $imp(m) = 0$\;
      Solve Problem~\ref{modprb} for cache $m$ and find $\mathbf{\bar{b}}^{(m)}$ using the information coming from neighbours\;
      Compute $f^{(m)}(\mathbf{\bar{b}^{(m)}}, \mathbf{b}^{(-m)})$\;
      \If{$f^{(m)}(\mathbf{\bar{b}^{(m)}}, \mathbf{b}^{(-m)}) - f^{(m)}(\mathbf{b^{(m)}}, \mathbf{b}^{(-m)}) \neq 0$}
	{
		$imp(m) = 1$
	}
}
  \end{algorithm}
\end{figure}

\subsection{Poisson placement of caches}
Consider the case of caches with $K$-slot cache memory and the content library of size $J = 100$. We set $K = 3$. We assume a Zipf distribution for the file popularities, setting $\gamma = 1$ and taking $a_j$ according to \eqref{zipfpars}. We have chosen the intensity of the Poisson process equal to $\lambda = 8 \times 10^{-6}$. Base stations' coverage radius is set to $r=1000$ $m$.

For the stochastic simulated annealing, two different cooling schedules are considered, \ie two different $d$ values.

Our goals are to see if the proposed solution algorithms converge, to compare the performances of the algorithms and to compare them with the probabilistic placement strategy~\cite{optimalgeographic} and with multi-LRU-One caching ~\cite{multiLRU}.

In~\cite{optimalgeographic}, it has been already shown that it is not optimal to cache the most popular contents everywhere. In Multi-LRU caching policies~\cite{multiLRU}, the main assumption is that a user who is covered by multiple caches can check all the caches for the requested file and download it from any one that has it in its inventory. In Multi-LRU-One caching policy, if the requested file is found in a non-empty subset of the caches that is covering a user, only one cache from the subset is updated. If the object is not found in any cache, it is inserted only in one. In this work, the selected cache for the update will be picked uniformly at random from the caches covering the user.

\begin{figure}
\centering
\includegraphics[width=0.6\columnwidth]{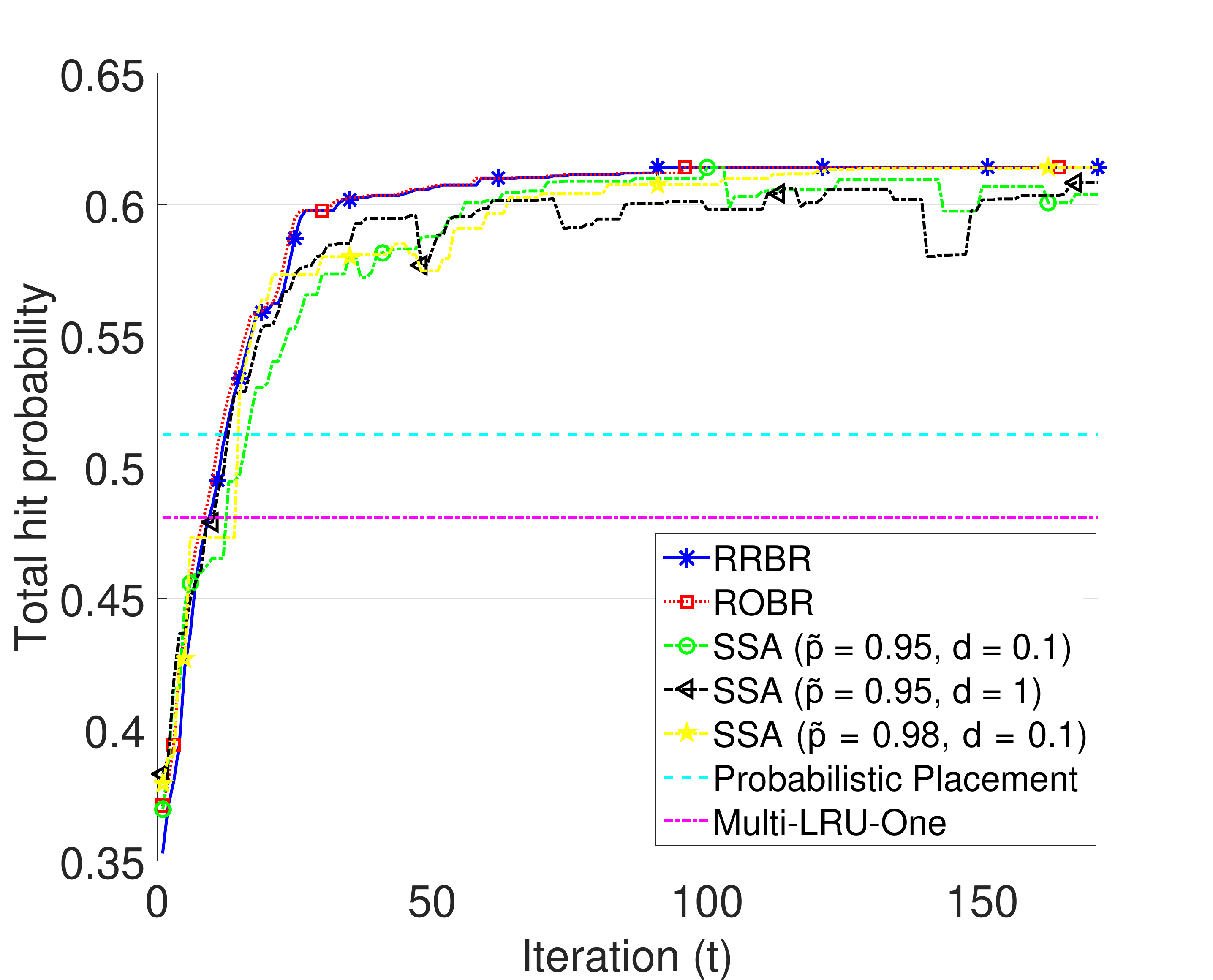}
\caption{Hit probability evolution for different algorithms ($J = 100, K = 3$).}
\label{fig:general}
\end{figure}

In Figure~\ref{fig:general}, the hit probability (one-minus-miss probability) evolution for the proposed solution algorithms are shown. Both RRBR, ROBR and SSA algorithms converge to the same total hit probability value. First, we ran a simulation with $d = 1$, because by Theorem~\ref{thm:depth} this guarantees convergence. In all our experiments, we have observed that random selections of the SSA decreases the hit probability by no more than $0.1$ until convergence. Consequently, we ran another SSA with $d = 0.1$. Both SSA algorithms converge to the same value, however using a smaller $d$ increases the convergence speed of the SSA. This is not surprising because the temperature is decreased further when $d$ is smaller. Also, it is easy to observe that using a larger $\tilde{p}$ helps the algorithm stay closer to the optimal solution at each iteration step.

We see that coordination between caches in ROBR and RRBR helps to improve the overall performance compared to probabilistic placement proposed in \cite{optimalgeographic} and Multi-LRU-One decentralized caching policy proposed in~\cite{multiLRU}. We observe significant increase in total hit probability when we exploit the information sharing between neighbouring caches.

\subsection{A real wireless network: Berlin network}
\label{subsec:Berlin}
In this section we will evaluate the performance of the topology of a real wireless network. We have taken the positions of 3G base stations provided by the OpenMobileNetwork project~\cite{openmobilenetwork}. The base stations are situated in the area $1.95 \times 1.74$ $kms$ around the TU-Berlin campus. Base stations' coverage radius is equal to $r = 700$ $m$. The positions of the base stations from the OpenMobileNetwork project is shown in Figure~\ref{fig:realdatamap}. We note that the base stations of the real network are more clustered because they are typically situated along the roads.

In order to compare the Spatial Homogeneous Poisson Process performance with the real network, we have chosen the intensity of the Poisson process equal to the density of base stations in the real network. In Figure~\ref{fig:realizationmap} one can see a realization of the Spatial Homogeneous Poisson Process with $\lambda = 1.8324 \times 10^{-5}$. The density is small because we measure distances in metres. Base stations' coverage radius is set to $r = 700$ $m$. We have averaged over 100 realizations of the Poisson process.

\begin{figure}
\centering
\begin{minipage}{.49\textwidth}
    \centering
\includegraphics[width=1\columnwidth]{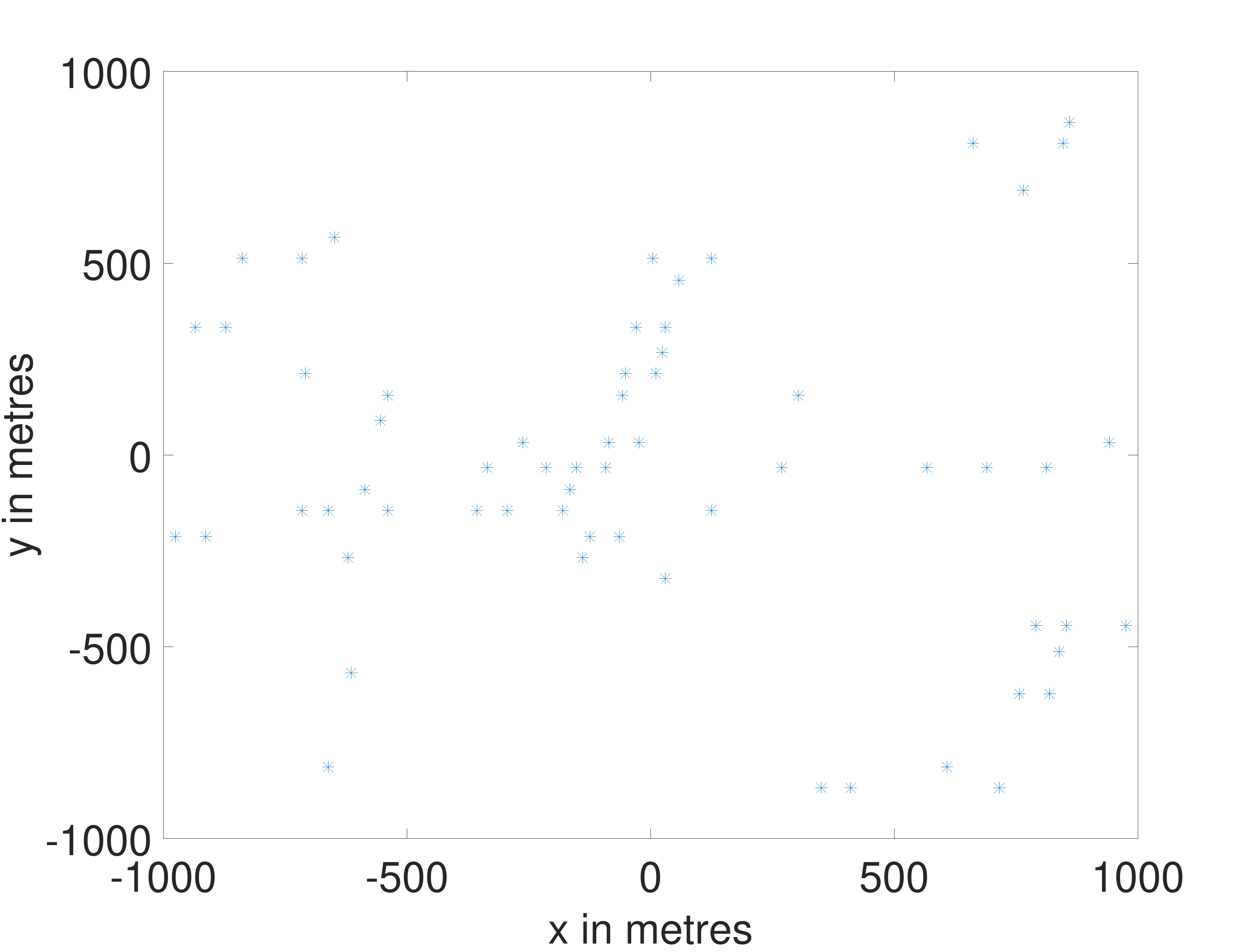}
\caption{Location of Base Stations from OpenMobileNetwork dataset.}
\label{fig:realdatamap}
\end{minipage}%
\hspace{0.01\textwidth}
\begin{minipage}{.49\textwidth}
    \centering
\includegraphics[width=1\columnwidth]{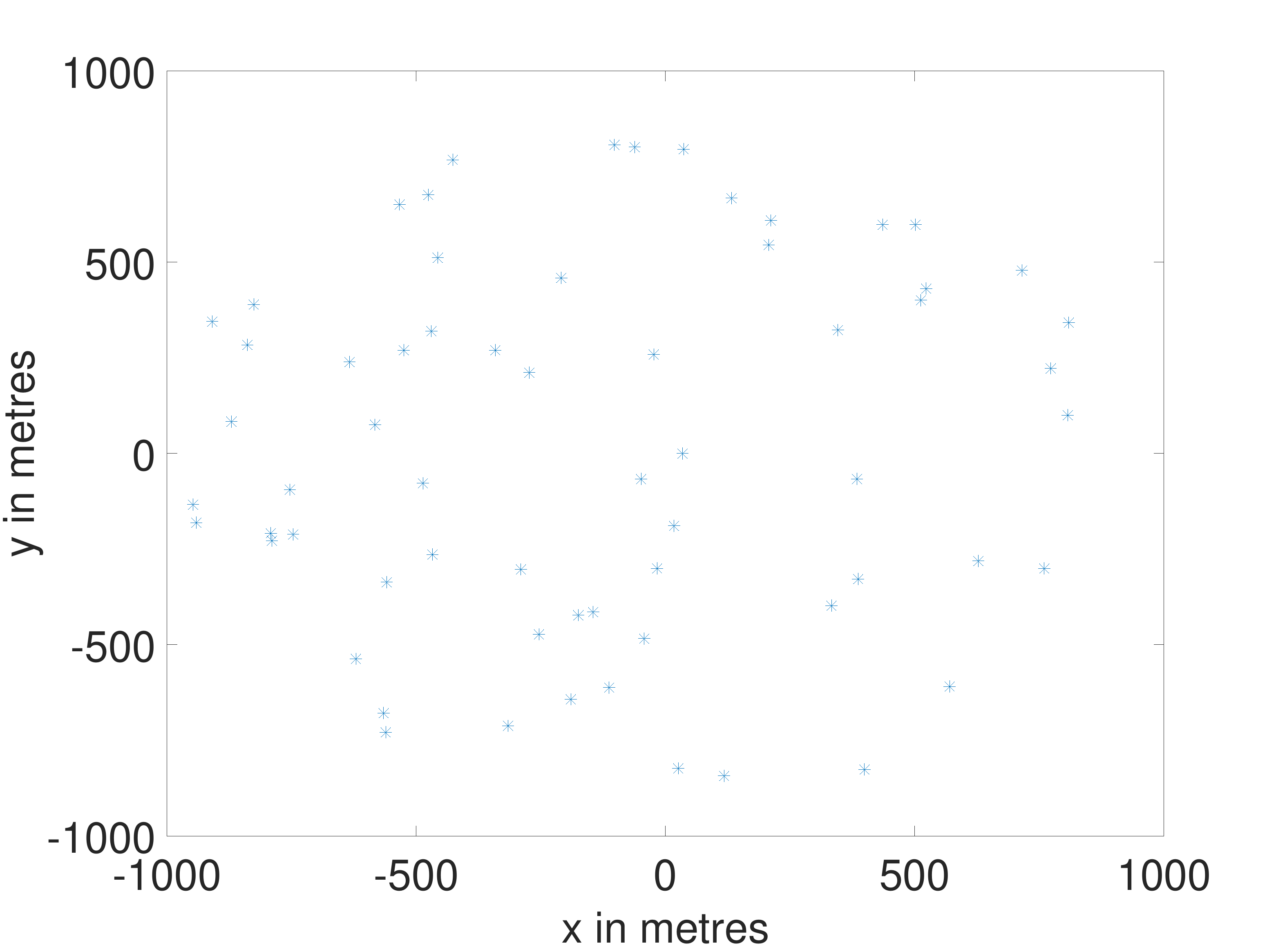}
\caption{A realization of the Spatial Homogeneous Poisson Process.}
\label{fig:realizationmap}
\end{minipage}
\end{figure}


Consider the case of caches with $K$-slot cache memory and the content library of size $J = 200$. We set $K = 3$. We assume a Zipf distribution for the file popularities, setting $\gamma = 1$ and taking $a_j$ according to \eqref{zipfpars}. We find the probabilistic placement strategy by the strategy given in~\cite{optimalgeographic} by using the parameters given earlier ($\lambda = 1.8324 \times 10^{-5}$ and $r = 700$), store the files accordingly over all caches and compute the hit probability for the probabilistic placement policy. In Figure~\ref{fig:Results}, the hit probability (one-minus-miss probability) evolution for the ROBR algorithm and the probabilistic placement policy for both averaged over $100$ Poisson Point realizations and real OpenMobileNetwork topology is shown. We see that running the ROBR algorithm on both homogeneous Poisson Process realizations and the real topology performs significantly better than the probabilistic placement policy. We observe that the total hit probability for the average of $100$ homogeneous Poisson Process realizations is higher than the one of real topology. The reason is that the spread fashion of the homogeneous Poisson Point realizations allows more coordination between the base stations compared to the clustered fashion of the real topology, leading to the fact that they share more information between each other at each iteration leading to higher hit probability in the end. Note that in real topology shown in Figure~\ref{fig:realdatamap}, there are many uncovered areas; however we only consider the regions that are covered by at least one of the base stations while computing our performance metric.

\begin{figure}
\centering
\includegraphics[width=0.6\columnwidth]{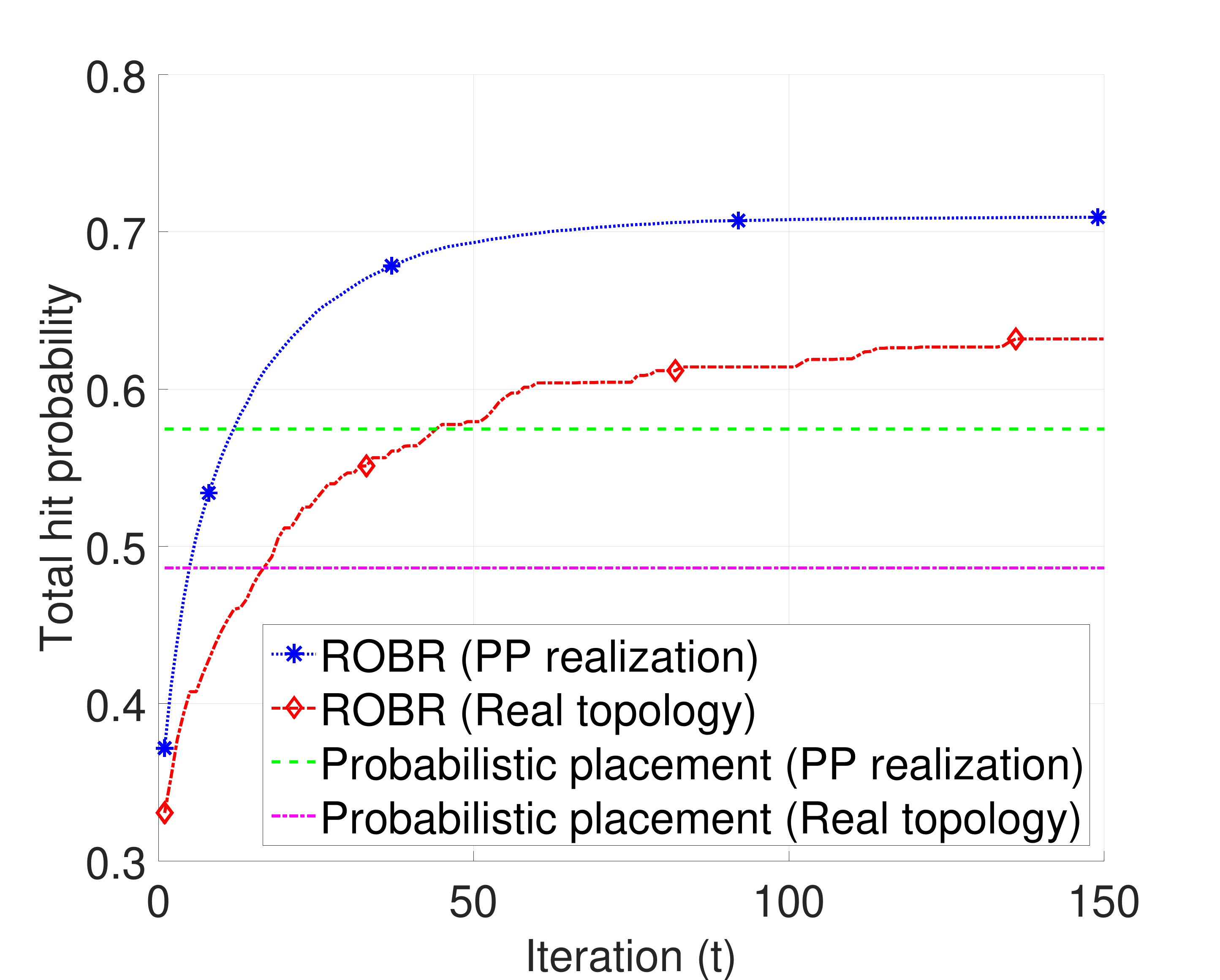}
\caption{Hit probability evolution ($J = 200, K = 3$).}
\label{fig:Results}
\end{figure}

Now we will present a more realistic example in terms of library size. We can get the numerical results for huge library size by using ROBR Algorithm. Consider the case of caches with $K$-slot cache memory and the content library of size $J = 100.000$. We set $K = 10$. We assume a Zipf distribution for the file popularities, setting $\gamma = 1$ and taking $a_j$ according to \eqref{zipfpars}. We find the probabilistic placement strategy by the strategy given in \cite{optimalgeographic} by using the parameters given earlier ($\lambda = 1.8324 \times 10^{-5}$ and $r = 700$), store the files accordingly over all caches and compute the hit probability for the probabilistic placement policy. We store the most popular $K$ files over all caches in most popular content placement algorithm. In Figure~\ref{fig:HugelibResults}, the hit probability (one-minus-miss probability) evolution for the ROBR Algorithm for real OpenMobileNetwork topology is shown. Recalling that there are $N = 62$ caches in total and all caches have capacity $K = 10$, all caches in the network can only store first $K \times N = 620$ files out of $100.000$. Then from \eqref{zipfpars}, it is trivial to see that $42.04\%$ of the files will always be missed. Due to nonhomogeneous scattering of the base station locations in real topology, storing the most popular content gives slight performance advantage over the probabilistic placement policy. We see that running the ROBR algorithm on the real topology performs significantly better than the probabilistic placement policy. The clustered fashion of the real network topology allows more coordination between the base stations, leading to the fact that they share more information between each other at each iteration leading to higher hit probability. Also note that in real topology shown in Figure~\ref{fig:realdatamap}, there are many uncovered areas; however we only consider the regions that are covered by at least one of the base stations while computing our performance metric.
\begin{figure}
\centering
\begin{minipage}{.5\textwidth}
\centering
\includegraphics[width=1\columnwidth]{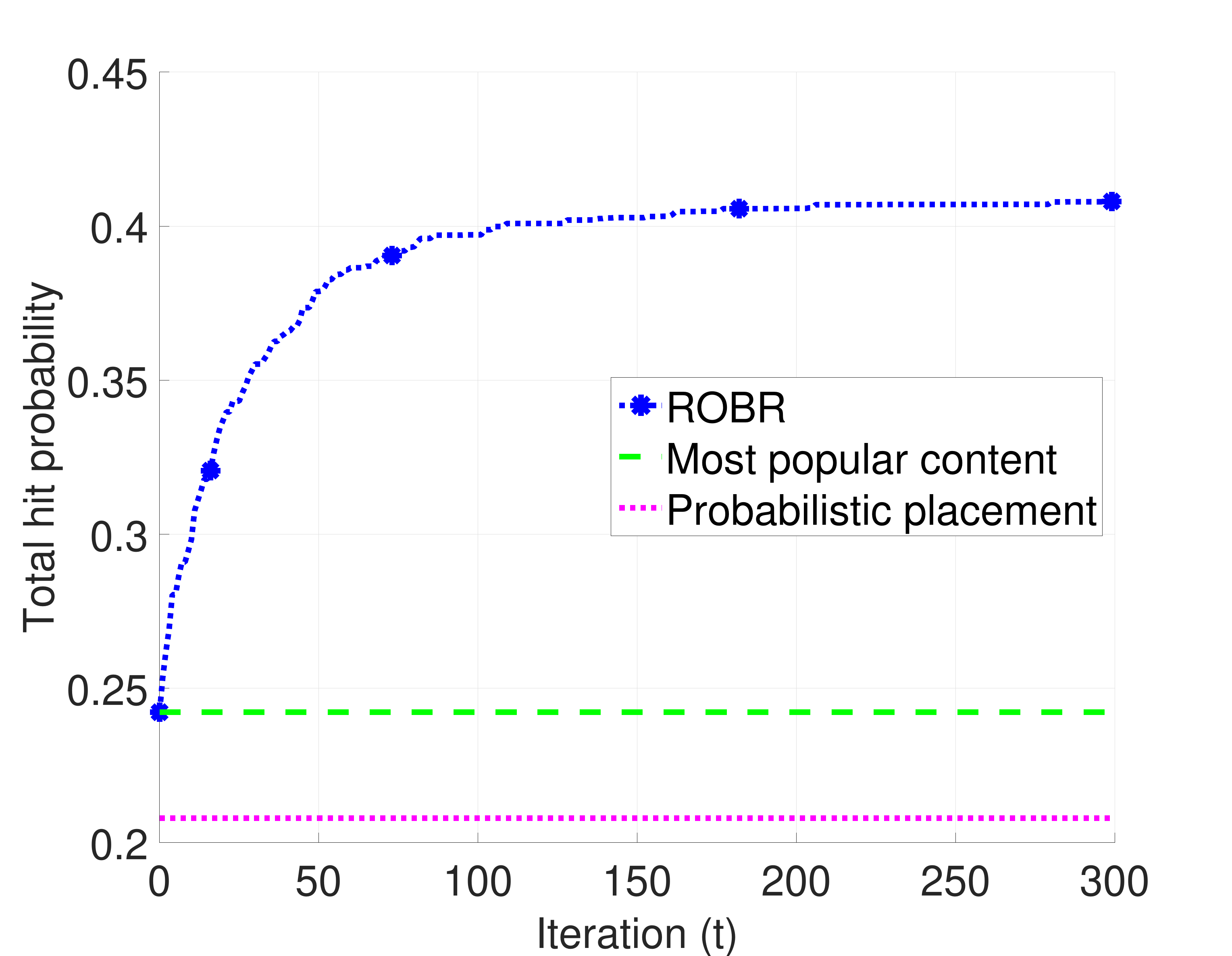}
\caption{Hit probability evolution ($J = 100000, K = 10$).}
\label{fig:HugelibResults}
\end{minipage}%
\begin{minipage}{.5\textwidth}
\centering
\includegraphics[width=1\columnwidth]{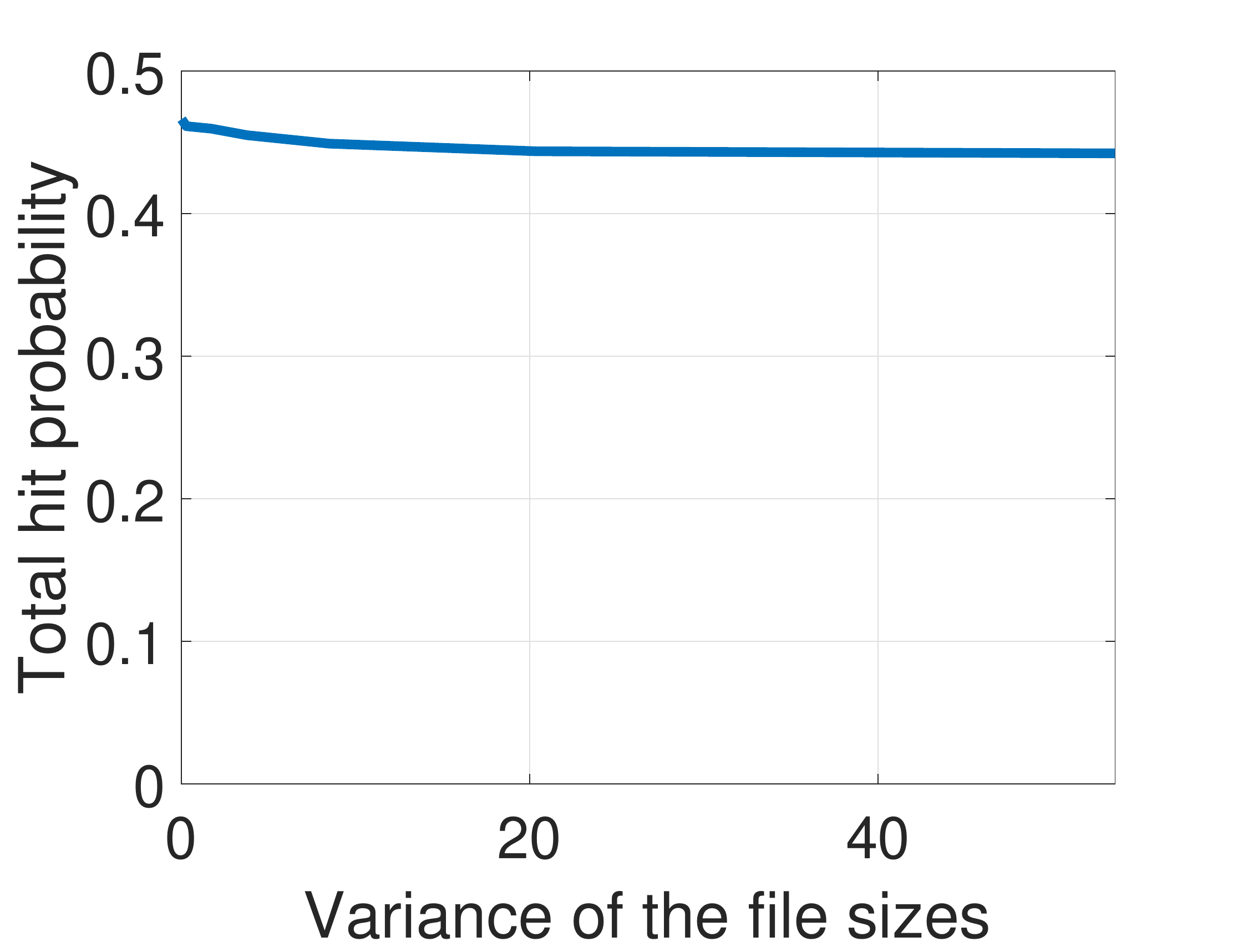}
\caption{Hit probability evolution for nonhomogeneous file sizes ($J = 100000, K = 20$).}
\label{fig:lognorm}
\end{minipage}
\end{figure}

Now we will present the performance of our algorithm when the file sizes are nonhomogeneous. We will present the numerical results again by using ROBR Algorithm. Consider the case of caches with $K$-slot cache memory and the content library of size $J = 100.000$. We set $K = 20$. We assume a Zipf distribution for the file popularities, setting $\gamma = 1$ and taking $a_j$ according to \eqref{zipfpars}.

For every file $a_j$ we generate a random number $\zeta_j$ from a log-normal distribution with mean $1$ for various variances. Then the file $a_j$ has size $\zeta_j$. Next, we define how the cache placement strategy works. After running the ROBR algorithm, depending on the resulting placement strategy obtained from the best response dynamics, each cache can store a file unless the sum of the size of the stored files exceed the cache capacity, \ie the cache starts filling its capacity by the file that it is supposed to store with the lowest index until it saturates (and skip the ones in between if they exceed the capacity and continue with the next one.).

In Figure~\ref{fig:lognorm}, the converged hit probability (one-minus-miss probability) values for the ROBR Algorithm for real OpenMobileNetwork topology for nonhomogeneous file sizes is shown. Recalling that there are $N = 62$ caches in total and all caches have capacity $K = 20$, all caches in the network can only store first $K \times N = 620$ files out of $100.000$. Then from \eqref{zipfpars}, it is trivial to see that $36.31\%$ of the files will always be missed. As the variance between file sizes increases, the ROBR algorithm converges to smaller hit probability. However, the difference is very small and negligible. If the most popular files have huge capacity, this can also be solved with adjusting the cache capacities accordingly.



\subsection{Grid network}
In this section we will present the performance analysis of a $4 \times 4$ grid network. The main aim of this section is to show that ROBR algorithm might get stuck at a local optimum in a symmetric network topology.

Consider the case of caches with $K$-slot cache memory and the content library of size $J = 1000$. We set the cache capacity of the caches as $K = 3$. We assume a Zipf distribution for the file popularities, setting $\gamma = 1$ and taking $a_j$ according to \eqref{zipfpars}. Base stations' coverage radius is set to $r = 700$ $m$. Distance between caches is set to $d = r\sqrt{2}$ $m$. For DSA, initial $\tau$ is set to $\tau = 10^{-3}$ and decreased exponentially, resulting in having $\tau = 10^{-6}$ at the $1500$th iteration step.

In Figure~\ref{fig:gridnetcomp}, the hit probability evolution for $4\times4$ grid network is shown. In previous network topology illustrations, \eg in spatial homogeneous Poisson point process example or Berlin network, we have a non-symmetric geographical cache placement and we have observed that the hit probability converges to the global optimum with ROBR and RRBR (as it converges to the same value with SA.). However, in this example we observe that the hit probability may converge to slightly different hit probabilities (the difference is in the order of $10^{-3}$.) by running ROBR algorithm multiple times. We see that DSA also converges to the global optimum rapidly, when each $b_j^{(m)}$ is rounded to its nearest integer, verifying Corollary~\ref{cor:DSA} numerically.
\begin{figure}
\centering
\includegraphics[width=0.6\columnwidth]{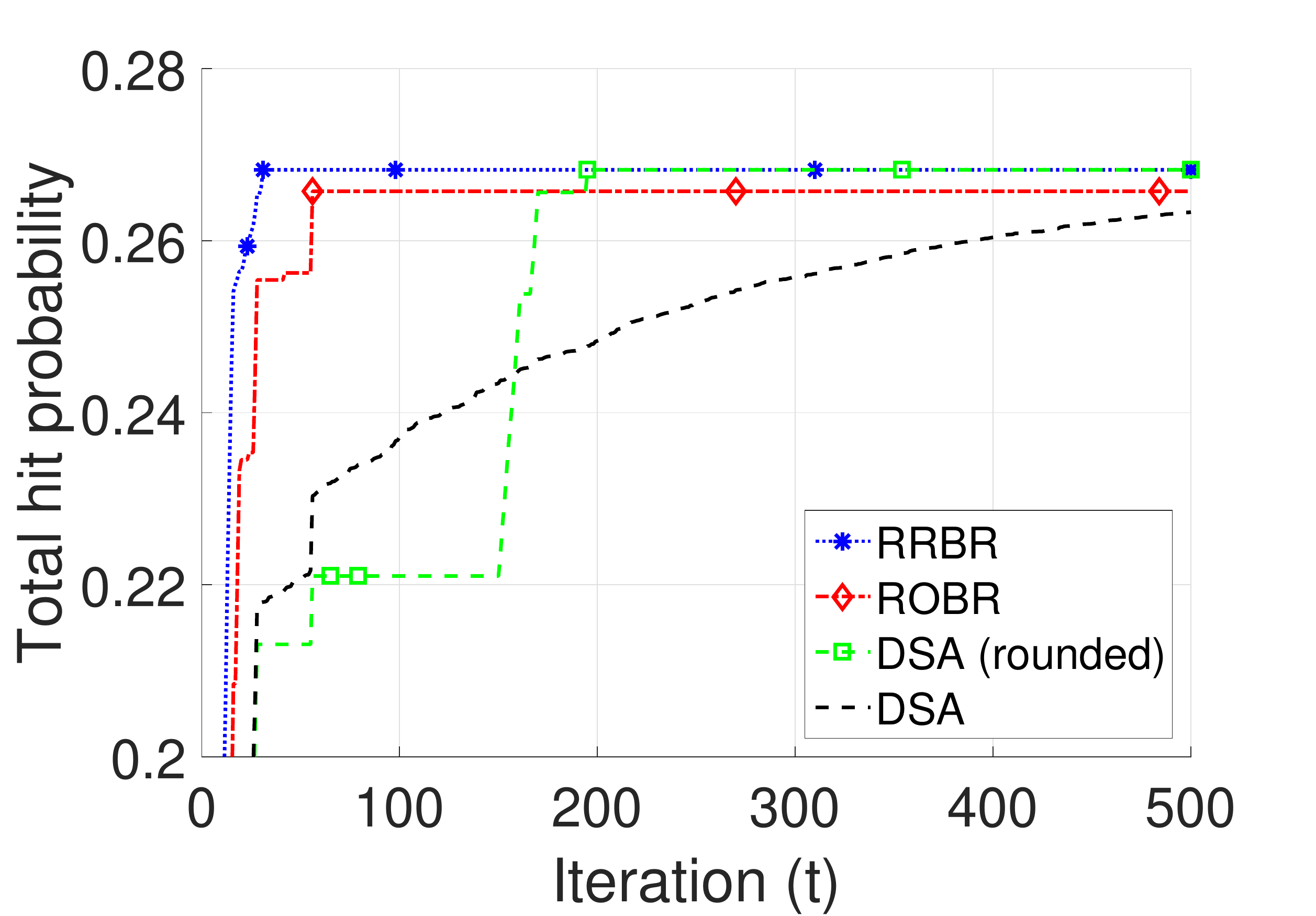}
\caption{Hit probability evolution for $4 \times 4$ grid network.}
\label{fig:gridnetcomp}
\end{figure}

\subsection{Resulting placement strategy: An example}
\begin{figure}[t!]
\centering
%
  \begin{tikzpicture}[scale=0.18]
 \coordinate (Origin)   at (0,0);
    \coordinate (XAxisMin) at (-15,0);
    \coordinate (XAxisMax) at (15,0);
    \coordinate (YAxisMin) at (0,-15);
    \coordinate (YAxisMax) at (0,15);

    \coordinate (Cone) at (-15,15);
    \coordinate (Ctwo) at (-2,15);

    \coordinate (Cthree) at (-13,-1);
    \coordinate (Cfour) at (-4,-1);

    \coordinate (Cfive) at (-11,-17);
    \coordinate (Csix) at (-6,-17);

    \coordinate (Cseven) at (-9,-33);
    \coordinate (Ceight) at (-8,-33);
	
\node[draw,blue,circle,inner sep=2pt,fill] at (Cone) {};
\node[draw,red,circle,inner sep=2pt,fill=none] at (Ctwo) {};
\node[draw,blue,circle,inner sep=2pt,fill] at (Cthree) {};
\node[draw,red,circle,inner sep=2pt,fill=none] at (Cfour) {};
\node[draw,blue,circle,inner sep=2pt,fill] at (Cfive) {};
\node[draw,red,circle,inner sep=2pt,fill=none] at (Csix) {};
\node[draw,blue,circle,inner sep=2pt,fill] at (Cseven) {};
\node[draw,red,circle,inner sep=2pt,fill=none] at (Ceight) {};

\draw[blue](Cone)node[label=above:{$\left[1\hspace{.1cm} 1 \hspace{.1cm}1 \hspace{.1cm}0 \hspace{.1cm}0 \hspace{.1cm}0\right]$}]{};
\draw[red](Ctwo)node[label=below:{$\left[1 \hspace{.1cm}1 \hspace{.1cm}1 \hspace{.1cm}0 \hspace{.1cm}0 \hspace{.1cm}0\right]$}]{};
\draw[blue](Cthree)node[label=above:{$\left[1 \hspace{.1cm}1 \hspace{.1cm}1 \hspace{.1cm}0 \hspace{.1cm}0 \hspace{.1cm}0\right]$}]{};
\draw[red](Cfour)node[label=below:{$\left[1 \hspace{.1cm}1 \hspace{.1cm}0 \hspace{.1cm}1 \hspace{.1cm} 0\hspace{.1cm} 0\right]$}]{};
\draw[blue](Cfive)node[label=above:{$\left[1 \hspace{.1cm}1 \hspace{.1cm}1 \hspace{.1cm}0 \hspace{.1cm}0 \hspace{.1cm}0\right]$}]{};
\draw[red](Csix)node[label=below:{$\left[1 \hspace{.1cm}0 \hspace{.1cm}0 \hspace{.1cm}1 \hspace{.1cm}1 \hspace{.1cm}0\right]$}]{};
\draw[blue](Cseven)node[label=above:{$\left[1 \hspace{.1cm}1 \hspace{.1cm}1 \hspace{.1cm}0 \hspace{.1cm}0 \hspace{.1cm}0\right]$}]{};
\draw[red](Ceight)node[label=below:{$\left[0 \hspace{.1cm}0 \hspace{.1cm}0 \hspace{.1cm}1 \hspace{.1cm}1 \hspace{.1cm}1\right]$}]{};

\draw[blue,ultra thin] (Cone) circle (7cm);
\draw[red,ultra thin] (Ctwo) circle (7cm);
\draw[blue,ultra thin] (Cthree) circle (7cm);
\draw[red,ultra thin] (Cfour) circle (7cm);
\draw[blue,ultra thin] (Cfive) circle (7cm);
\draw[red,ultra thin] (Csix) circle (7cm);
\draw[blue,ultra thin] (Cseven) circle (7cm);
\draw[red,ultra thin] (Ceight) circle (7cm);
  \end{tikzpicture}
\caption{An example of the resulting optimal placement strategies for a small network.}
\label{fig:placementstrategy}
\end{figure}
In this section we will present a simple example showing the resulting placement strategies of the caches obtained by the ROBR algorithm. For this specific example, all three caches have $K$-slot cache memory and the content library of size $J = 200$. We set $K = 3$. We assume a Zipf distribution for the file popularities, setting $\gamma = 1$ and taking $a_j$ according to \eqref{zipfpars}.

It can be seen in Figure~\ref{fig:placementstrategy} that as the common coverage area between two neighbouring caches increases, the probability of storing less popular files increases. In this specific scenario, the blue cache has been updated first and is always storing the three most popular files. As the neighbouring area between itself and red cache increases, red cache starts storing less popular files. It is evident that no simple heuristic can be used to give a similar placement strategy covering all such scenarios. Combining the structure of the resulting placement strategies with the performance evaluations that we have presented earlier, we conclude that our low-complexity distributed algorithm provides significant hit-probability improvement by exploiting the information sharing between neighbouring caches.

\section{Discussion and Conclusion}\label{sec:discussion}
In the current paper we have provided a low-complexity asynchronously distributed cooperative caching algorithm in cellular networks when there is communication only between caches with overlapping coverage areas. We have provided a game theoretic perspective on our algorithm and have related the algorithm to a best response dynamics in a game. Using this connection, we have shown that the complexity of each best response step is independent of the catalog size, linear in cache capacity and linear in the maximum number of caches that cover a certain area. Furthermore, we have shown that the overall algorithm complexity for the discrete placement of caches is polynomial in network size and catalog size. Moreover, we have shown that the algorithm converges in just a few iterations by the aid of practical examples. In most cases of interest, our basic low-complexity algorithm finds the best Nash equilibrium corresponding to the global optimum. We have given an upper bound to the rate of convergence of our algorithm by using the value for which we have found for the minimum improvement in hit probability in the overall network. For the cases where the algorithm converges to a local optimum, we have shown that the resulting performance gap in comparison with the global optimum is very small. We have also provided two simulated annealing based extensions of our basic algorithm to find
global optimum. We have demonstrated the hit probability evolution on real and synthetic networks and have shown that our distributed cooperative caching algorithm performs significantly better than storing the most popular content, probabilistic content placement policy and Multi-LRU caching policies. For libraries with nonhomogeneous file sizes, we have shown that as the variance of the log-normal file sizes increases, the total hit probability decreases. However, the difference turns out to be very small. We have also given a practical example where the basic algorithm converges to the local optimum, showed that the performance gap is very small, and our simulated annealing based algorithms converge to the global optimum even when the same cache update sequence of the basic algorithm has been followed. Finally, we have provided an example of the resulting optimal placement strategies for a small network.

In future work we will generalize this analysis for instance to consider time-varying file popularities. In particular, the aim is to obtain a more fundamental insight into the behaviour of more dynamic caching models in stochastic geometry settings.

\appendix


\section{Proof of Theorem~5} \label{app:improvement}
In this section we provide a lower bound on the improvement in hit probability that is offered after one best response under \emph{discrete placement} of caches.
We denote by $\epsilon$ the minimum improvement that can be guaranteed in a step.

\begin{lem}
The minimum improvement $\epsilon$ is obtained when there is only a single file update in the cache.
\end{lem}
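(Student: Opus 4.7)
The plan is to exploit the linearity of the miss probability in the placement vector of a single cache. For fixed $\mathbf{b}^{(-m)}$, one can rewrite
\[
f(\mathbf{B}) = C(\mathbf{b}^{(-m)}) + \sum_{j=1}^{J} a_j (1-b_j^{(m)}) q_m(j),
\]
where $C$ does not depend on $\mathbf{b}^{(m)}$ and the weights $a_j q_m(j)$ depend only on the neighbours' placements. Thus the contribution of each file to the miss probability decouples, and any change at cache $m$ can be analysed file by file.

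First I would consider an arbitrary update at cache $m$ that changes $\mathbf{b}^{(m)}$ to $\mathbf{\tilde b}^{(m)}$. Because of the capacity constraint $\sum_j b_j^{(m)} = K$, the update is characterised by a set $D_-$ of evicted files and a set $D_+$ of newly cached files with $|D_-|=|D_+|=k$ for some integer $k\geq 1$. By linearity, the resulting improvement is
\[
\Delta \;=\; f(\mathbf{b}^{(m)},\mathbf{b}^{(-m)}) - f(\mathbf{\tilde b}^{(m)},\mathbf{b}^{(-m)}) \;=\; \sum_{j'\in D_+} a_{j'} q_m(j') \;-\; \sum_{j\in D_-} a_j q_m(j).
\]
Choosing any bijection between $D_-$ and $D_+$, this decomposes as $\Delta = \sum_{i=1}^k \Delta_i$, where $\Delta_i = a_{j'_i} q_m(j'_i) - a_{j_i} q_m(j_i)$ is the improvement obtained by performing only the $i$-th single-file swap. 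Crucially, each intermediate configuration produced by a single swap still has exactly $K$ stored files and is therefore a legitimate local update in its own right.

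The second step is a pigeonhole argument. Suppose the $k$-file update yields $\Delta>0$. Since the $\Delta_i$ sum to $\Delta$, at least one $\Delta_i$ is strictly positive, and the smallest strictly positive $\Delta_i$ is no larger than $\Delta/k \leq \Delta$. In other words, for any multi-file update producing strict improvement $\Delta$, one can exhibit a single-file-swap update producing a strictly positive improvement not exceeding $\Delta$. Hence the infimum of strict improvements over the larger class (all local updates) equals the infimum over the restricted class (single-file swaps), so $\epsilon$ is attained, up to equality, by a single-file update, which is the statement of the lemma.

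The main obstacle is mostly conceptual: one must check that the intermediate swap-by-swap states are valid placements (this follows from the fact that each single swap preserves the count of stored files) and that the decomposition above holds for any pairing of $D_-$ and $D_+$, which is immediate because the expression for $\Delta$ is a difference of sums and thus independent of how the evicted files are matched with the newly cached files.
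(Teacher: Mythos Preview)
Your decomposition of a multi-file update into single-file swaps is exactly the approach the paper takes. However, your pigeonhole step is incorrect: from $\sum_{i=1}^k \Delta_i = \Delta > 0$ you conclude that the smallest strictly positive $\Delta_i$ is at most $\Delta/k$. This fails whenever some $\Delta_i$ are negative. For example, with $k=2$, $\Delta_1 = 50$ and $\Delta_2 = -49$, one has $\Delta = 1$ but the only positive $\Delta_i$ equals $50 > \Delta$. In fact, the conclusion itself (not just the bound) can fail for updates that are not best responses: take $D_-$ with weights $a_{j}q_m(j)$ equal to $0$ and $100$, and $D_+$ with weights $50$ and $51$. Then $\Delta = 1$, yet every single-file swap between $D_-$ and $D_+$ has improvement either nonpositive or at least $50$. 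So no single-file swap achieves a positive improvement $\leq \Delta$, and your argument as written cannot be rescued merely by choosing a different pairing.

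The missing ingredient, which the paper uses explicitly, is that $\mathbf{\tilde b}^{(m)}$ is a \emph{best response}: it stores precisely the $K$ files with largest $a_j q_m(j)$, so every file in $D_+$ has weight at least that of every file in $D_-$. Consequently $\Delta_i \geq 0$ for every pairing, and hence each $\Delta_i \leq \Delta$. Since $\Delta>0$, at least one $\Delta_i$ is strictly positive and satisfies $0<\Delta_i\leq \Delta$, giving the desired single-file swap. Your proof is easily repaired by inserting this observation in place of the erroneous averaging claim.
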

\begin{proof}
The initialization of the algorithms ensures that we have exactly $K$ $1$'s and $J-K$ $0$'s in the placement policy for each cache and Theorem~\ref{PPPopt} ensures that this property is preserved. Therefore, we can decompose each update in the placement policy into single changes (replacing one file by another file). Because the new policy is locally optimal, each such change brings a positive improvement (Otherwise the local optimum would be a policy that is obtained by excluding the negative improvement). The smallest improvement is provided making a single change.
\end{proof}

\begin{lem} \label{lem:diff}
If the hit probability is improved, it is improved by at least $\min_{i,j,m} \left| a_i q_m(i) - a_j q_m(j)\right|$, \ie
\begin{equation}
\epsilon \geq \min_{i,j,m} \left| a_i q_m(i) - a_j q_m(j)\right|.
\end{equation}
\end{lem}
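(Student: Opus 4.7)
The plan is to exploit the previous lemma together with the potential-game structure established in Theorem~\ref{thm:potentialgame} to reduce the question to a single-file-swap computation in the local cost $f^{(m)}$. By the previous lemma, it suffices to bound the improvement obtained when cache $m$ swaps out one file $i$ (with $b_i^{(m)}=1$) and swaps in one file $j$ (with $b_j^{(m)}=0$), leaving the placements of all other caches unchanged.

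First I would use the potential-game identity to replace the change in the global objective $f$ by the change in the local cost $f^{(m)}$. Next, I would observe that the quantity $q_m(\cdot)$ defined in~\eqref{eq:qdef} depends only on $\mathbf{b}^{(-m)}$, so it is unaffected by the swap. Consequently, in
\[
f^{(m)}\bigl(\mathbf{b}^{(m)},\mathbf{b}^{(-m)}\bigr) = \sum_{k=1}^J a_k (1-b_k^{(m)}) q_m(k),
\]
only the two terms indexed by $i$ and $j$ change. Evaluating before and after the swap collapses everything to
\[
f^{(m)}\bigl(\mathbf{\tilde b}^{(m)},\mathbf{b}^{(-m)}\bigr) - f^{(m)}\bigl(\mathbf{b}^{(m)},\mathbf{b}^{(-m)}\bigr) = a_i q_m(i) - a_j q_m(j).
\]

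Since the swap is assumed to be improving, this difference is strictly negative, so the resulting decrease in miss probability equals $a_j q_m(j) - a_i q_m(i) = \lvert a_i q_m(i) - a_j q_m(j)\rvert$. Taking the minimum over all triples $(i,j,m)$ that correspond to an improving single-file swap then yields the claimed lower bound on $\epsilon$. The computation is essentially mechanical; the only points that need care are verifying that $q_m$ is invariant under the swap (immediate from~\eqref{eq:qdef}) and invoking the potential-game identity so that the local improvement really does equal the global improvement. There is no substantive obstacle beyond these bookkeeping checks.
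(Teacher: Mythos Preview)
Your argument is correct and follows essentially the same route as the paper: reduce to a single-file swap via the previous lemma, compute the change in $f^{(m)}$ term-by-term (only the $i$- and $j$-terms survive because $q_m$ depends solely on $\mathbf{b}^{(-m)}$), and identify the improvement with $\lvert a_i q_m(i) - a_j q_m(j)\rvert$. The only cosmetic differences are that the paper works directly with $f^{(m)}$ rather than first invoking the potential-game identity, and it labels the swapped-in and swapped-out files the other way around; neither affects the substance.
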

\begin{proof}
The difference of the hit probability after and before the local update is
\begin{align}
[1 - f^{(m)}(\mathbf{\tilde b}^{(m)},\mathbf{b}^{(-m)})] - [1 - f^{(m)}(\mathbf{b}^{(m)},\mathbf{b}^{(-m)})]
&= \sum_{j=1}^J a_j \left(\tilde{b}_j^{(m)} -{b}_j^{(m)} \right) \sum_{\substack{s \in \Theta \\ m \in s}} p_s \prod_{\ell \in s\setminus \{m\}}(1 - b_j^{(\ell)})\nonumber\\
&= \sum_{j=1}^J a_j \left(\tilde{b}_j^{(m)} -{b}_j^{(m)} \right) q_m(j), \nonumber\\
&= a_i q_m(i) - a_j q_m(j), \label{eq:diffexpr}
\end{align}
where we recall that
$$
q_m(j) = \sum_{\substack{s \in \Theta \\ m \in s}} p_s \prod_{\ell \in s\setminus \{m\}}(1 -b_j^{(\ell)}),
$$
and where we have denoted by $\mathbf{b}^{(-m)}$ the placement strategies of all caches except $m$.
Now, the minimal improvement is given by minimizing~\eqref{eq:diffexpr} over $i$, $j$ and $m$. More precisely,
\begin{equation}
\epsilon = \min_{i,j,m} \left| a_i q_m(i) - a_j q_m(j)\right|.
\end{equation}
The result now follows from $a_i q_m(i) - a_j q_m(j) \geq \vert a_i q_m(i) - a_j q_m(j) \vert$.
\end{proof}

Our next result provides more insight into the behaviour of $\epsilon$. The result is expressed in terms of a geometric property of the coverage regions. First, observe that $q_m(j)$ is equal to the sum of the probabilities of being in a partial area (where cache $m$ is included) where file $j$ is missing. To conclude, the function $q_m(j)$ will give a probability value and it consists of a sum of $p_s$, where $m \in s$ and $s\in\Theta$, values. For any network topology configuration, the $p_s$ will only take a finite number of values. Under \emph{discrete placement} of caches, the minimum difference between two such (non-equal) values is just a function of $d$ and $r$, which we denote by $\kappa_3 (d,r)$ and it scales linearly with the total coverage area (so with the total number of caches $N$). Hence, we have
\begin{equation} \label{eq:appkappa3}
p_s \neq p_{s'} \Longrightarrow \left|p_s-p_{s'}\right|\geq \kappa_3 (d,r) N^{-1}.
\end{equation}

From the polynomial scaling of the popularity distribution we have
\begin{equation} \label{eq:appkappa4}
a_i \geq \kappa_4 J^{-\kappa_2},
\end{equation}
for any $1\leq i\leq J$, with constants $\kappa_2\geq 0$ and $\kappa_4>0$

\begin{lem} \label{lem:improvebound}
The minimum improvement in hit probability is lower bounded by
$$
\epsilon \geq \kappa_1 N^{-1} J^{-\kappa_2}.
$$
\end{lem}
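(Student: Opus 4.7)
The plan is to build directly on Lemma~\ref{lem:diff}, which already reduces the task to lower bounding $|a_iq_m(i)-a_jq_m(j)|$ over triples $(i,j,m)$ for which this quantity is strictly positive. The two ingredients I would combine are the polynomial lower bound $a_i\ge\kappa_4 J^{-\kappa_2}$ from \eqref{eq:appkappa4} and the $N^{-1}$-granularity of the $p_s$'s from \eqref{eq:appkappa3}.

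The first step is to lift the granularity statement on the $p_s$'s to a granularity statement on $q_m(j)$. Since $q_m(j)=\sum_{s\ni m}p_s\prod_{\ell\in s\setminus\{m\}}(1-b_j^{(\ell)})$ is a subset-sum of $\{p_s\}_{s\ni m}$, and the $p_s$'s take values in a finite set whose pairwise differences are at least $\kappa_3(d,r)N^{-1}$, the subset-sums also lie in a finite set with minimum positive gap of the same order $\kappa_3(d,r)N^{-1}$ — intuitively because, under discrete placement of caches on $d\mathbb{Z}^2$, the partial areas $|A_s|$ are integer multiples of a common geometric grain whose relative size scales as $\Theta(N^{-1})$. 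In particular, a nonzero $q_m(j)$ satisfies $q_m(j)\geq\kappa_3(d,r)N^{-1}$, and $q_m(i)\neq q_m(j)$ implies $|q_m(i)-q_m(j)|\geq\kappa_3(d,r)N^{-1}$.

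The second step is a case analysis on the improvement $a_iq_m(i)-a_jq_m(j)$. In the \emph{aligned} case where $q_m(i)=q_m(j)=q>0$, the improvement equals $(a_i-a_j)q$, and polynomial scaling (together with the Zipf-like structure of the distribution, under which $|a_i-a_j|\geq\kappa' J^{-\kappa_2}$ whenever $a_i\neq a_j$) yields $|a_iq_m(i)-a_jq_m(j)|\geq\kappa'\kappa_3\,N^{-1}J^{-\kappa_2}$. In the complementary case $q_m(i)\neq q_m(j)$, I would write $a_iq_m(i)-a_jq_m(j)=\sum_{s\ni m}p_s\,c_s$ with coefficients $c_s\in\{-a_j,\,0,\,a_i-a_j,\,a_i\}$, and argue that the finite set of attainable values of this expression has cardinality bounded by a function of $J$ and the (bounded) number of covering subsets, while the smallest nonzero entry is bounded below by a product of one $p_s$-grain $\Omega(N^{-1})$ and one $a$-grain $\Omega(J^{-\kappa_2})$. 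Collecting all constants into $\kappa_1$ gives $\epsilon\geq\kappa_1N^{-1}J^{-\kappa_2}$.

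The main obstacle I expect is the \emph{mixed} subcase inside the second step, in which $a_i>a_j$ but $q_m(i)<q_m(j)$ (or vice versa): a naive decomposition such as $a_j(q_m(i)-q_m(j))+(a_i-a_j)q_m(i)$ has terms of opposite sign, so one cannot simply drop the negative term. My intended remedy is to avoid any signed decomposition and instead argue via the finite-set-gap viewpoint above — the set $\{a_iq_m(i):1\le i\le J,\,m\in[1:N]\}$ is finite with polynomially-in-$1/J$ and $1/N$ spacing, so any two unequal values must differ by at least $\kappa_1N^{-1}J^{-\kappa_2}$, which is exactly the bound claimed.
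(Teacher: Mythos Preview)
Your proposal takes a much more elaborate route than the paper. The paper's proof of Lemma~\ref{lem:improvebound} is essentially one line: it invokes Lemma~\ref{lem:diff}, then simply combines the area-granularity bound~\eqref{eq:appkappa3} with the popularity lower bound~\eqref{eq:appkappa4} and sets $\kappa_1=\kappa_3(d,r)\kappa_4$. There is no case analysis, no lifting of the $p_s$-granularity to $q_m$-granularity, and no separate treatment of aligned versus non-aligned $q_m$ values.

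Your more careful case analysis is commendable, and you correctly put your finger on a real subtlety in the ``mixed'' subcase where $a_i>a_j$ but $q_m(i)<q_m(j)$: a naive signed decomposition indeed does not work. However, your proposed remedy --- the ``finite-set-gap viewpoint'' --- is circular. Asserting that the finite set $\{a_iq_m(i)\}$ has minimum positive spacing of order $N^{-1}J^{-\kappa_2}$ \emph{is} the statement to be proved; finiteness alone gives no quantitative gap. Concretely, nothing in the hypotheses prevents $a_iQ_1$ and $a_jQ_2$ (with $Q_1\neq Q_2$ two distinct subset-sums of the $p_s$'s) from being nonzero but much closer than the product of the individual grains, since products of elements from two well-separated sets need not themselves be well separated. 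Your ``aligned'' case also quietly imports an extra assumption: polynomial scaling as defined in the paper lower-bounds each $a_i$, not the differences $|a_i-a_j|$; you need the additional Zipf-like structure you allude to, which goes beyond the stated hypotheses.

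In short, the paper's argument is terse to the point of being heuristic, and you have correctly identified where the real difficulty lies; but your proposed fix does not close the gap you found. If you want to make the mixed case rigorous, you would need a genuine arithmetic/geometric argument showing that the values $a_iq_m(i)$ lie on a lattice (or quasi-lattice) of mesh $\Omega(N^{-1}J^{-\kappa_2})$ --- for instance, by showing that under discrete placement the unnormalized areas $|A_s|$ are integer multiples of a common grain and that the $a_i$'s are rational with polynomially bounded denominators --- rather than simply appealing to finiteness.
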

\begin{proof}
The minimum improvement in hit probability depends both on partial areas and the popularities of the files of interest, as given by Lemma~\ref{lem:diff}. The result now follows from~\eqref{eq:appkappa3} and~\eqref{eq:appkappa4}, with $\kappa_1 = \kappa_3(d,r)\kappa_4$.
%
\end{proof}

\end{document}